\theoremstyle{plain}
\newtheorem{proposition}{Proposition}
\newtheorem{theorem}{Theorem}
\newtheorem{lemma}{Lemma}
\theoremstyle{remark}
\newtheorem{definition}{Definition}
\newtheorem{remark}{Remark}
\title{Decentralized Heading Control with Rate Constraints using Pulse-Coupled Oscillators}
\author{Timothy Anglea\(^{1}\) (\textit{IEEE Member}) \
	\thanks{$^{1}$Timothy Anglea and Yongqiang Wang are with the department of Electrical \& Computer Engineering, Clemson University, Clemson, SC 29634, USA {\{tbangle,yongqiw\}@clemson.edu}}
	Yongqiang Wang\(^{1}\) (\textit{IEEE Senior Member})}
\date{\today}
\begin{document} 

\maketitle
\begin{abstract}
	Decentralized heading control is crucial for robotic network operations such as surveillance, exploration, and cooperative construction. However, few results consider decentralized heading control when the speed of heading adjustment is restricted. In this paper, we propose a simple hybrid-dynamical model based on pulse-coupled oscillators for decentralized heading control in mobile robots while accounting for the constraint on the rate of heading change. The pulse-coupled oscillator model is effective in coordinating the phase of oscillator networks and hence is promising for robotic heading coordination given that both phase and heading evolve on the same one-dimensional torus. However, existing pulse-coupled oscillator results require the phase adjustment to be instantaneous, which cannot hold for robot heading adjustment due to physical limitations. We propose a generalization to the standard pulse-coupled oscillator model to allow for the phase to adjust at a finite rate, yet still have the oscillator network converge to the desired state, making our approach applicable to robotic heading coordination under rate constraints. We provide rigorous mathematical proof for the achievement of both synchronized and desynchronized heading relationships, and experimentally verify the results using extensive tests on a multi-robot platform.
	
	Index - control constraint, heading control, synchronization, desynchronization, pulse-coupled oscillators, phase response functions
\end{abstract}

\section{Introduction} \label{sec:intro}

Driven by recent accelerated technological advances, mobile robot networks are receiving increased attention, being widely used in warehouse management, surveillance, reconnaissance, search and rescue, and even cooperative construction. Efficient and effective coordination of robotic networks requires advanced cooperative control algorithms, which have been extensively studied in the past decade \cite{bullo2009distributed, Murray_MAS2007, Murray_MVS2007}.

However, due to the disparate discrete-time nature of communication and the continuous-time evolution of motion dynamics, robotic network coordination intrinsically inherits hybrid dynamics, which makes its rigorous treatment very difficult. In fact, to simplify the design, many existing results (e.g., \cite{Sepulchre2007, Paley2007, Sepulchre2008, Chung2009, Moshtagh2009}) choose to design the decentralized controller in the continuous-time domain and then discretize the controller in implementation to conform to the discrete-time nature of communication. Such a design makes the obtained coordination algorithm sensitive to discretization errors, negatively affecting the achievable coordination accuracy.
	
Another difficulty in motion coordination lies in the control input constraint due to a finite speed of heading adjustment, which adds more nonlinearity to the already nonlinear robot network coordination problem.  Note that in the coordination of robot headings, since robot heading evolves in the nonlinear one-dimension torus, existing approaches addressing control-constrained consensus on states that evolve in Euclidean space, such as \cite{Abdessameud2010, YANG2012, zhu2016robust, mo2018distributed, Zhou2018, Deng2018}, are not applicable.

In this paper, we will consider an approach to achieve heading control in a decentralized network with a constraint on the rate of heading change using hybrid dynamics inspired by oscillator networks. Recent research on complex oscillator networks, such as \cite{Gjata2017, Monga2018}, has studied their collective behavior and the ability to control populations of oscillators. However, those results require continuous coupling between oscillators, and do not lend themselves well to the discrete nature of communication in robotic networks. Thus, we will focus on the pulse-coupled oscillator model. Not only is the approach completely decentralized, but it also avoids the discretization problem, as the hybrid dynamics take into account the continuous evolution of motion and the discrete nature of communication explicitly.

The pulse-coupled oscillator (PCO) model was first introduced by Peskin \cite{peskin:75}. He used pulse-coupled oscillators (PCOs) to model the synchronization of pacemaker cells in the heart. Mirollo and Strogatz later improved the model, providing a rigorous mathematical formulation \cite{mirollo:90}. Communication latency, packet loss, signal corruption, and energy consumption are all minimized due to the simplicity of the pulse-based communication between oscillators in the network \cite{Simeone2008}.

In this paper, we apply pulse-coupled oscillator strategies to control the heading, \(\theta\), of a group of mobile robots using the analogy between oscillator phase and robot headings which both evolve in the one-dimensional torus. More specifically, by relating the robot heading to the phase of an oscillator, we can coordinate the robot headings using phase coordination mechanisms based on the pulse-coupled oscillator model. With the many beneficial properties of PCO networks, such as low communication latency and energy consumption \cite{EnergyEfficientSync2012, WangOptimalPRCSync2012}, PCO synchronization and desynchronization strategies seem ideal for the decentralized control of a swarm of mobile robots \cite{GaoMotionControl2018}.

However, existing literature on PCO synchronization and desynchronization strategies, except for our recent work \cite{AngleaClockSync2019}, requires the oscillator phase to change instantaneously \cite{EnergyEfficientSync2012, WangOptimalPRCSync2012, Werner-Allen2005, ScalableSync2005, DorflerSyncSurvey2014, NishimuraSync2015, NunezSync2015, PulseSS2016, Nagpal2007, Scaglione2010, Ferrante2016, AngleaDesync2017, GaoPRCDesync2017}. Due to the natural physical constraint on the rate of rotation, direct application of these strategies is not feasible in actual heading control. In this paper, we propose a generalization of the standard PCO model which does not require instantaneous change in phase to achieve phase coordination. (It is worth noting that, different from our recent work \cite{AngleaClockSync2019} which only addresses clock synchronization, we address both synchronization and desynchronization of PCOs and their application to heading control.) The generalized PCO model can achieve proper phase coordination under an arbitrary constraint on the rate of phase adjustment. To reconcile the constantly evolving phase and the communication-triggered heading update, we also propose a heading coordination mechanism to control robot headings in the presence of heading rate constraints. The results address the hybrid dynamics of continuous-time motion evolution and discrete-time communication directly and hence can be applied in implementation without discretization.

In Section \ref{sec:prelim}, we will discuss preliminary details for a pulse-coupled oscillator network. In Section \ref{sec:HeadingControl}, we will describe the model for mobile robots under heading rate constraints and its relationship to oscillators under phase rate constraints. We will then rigorously show how a mobile robot network under heading rate constraints can achieve heading synchronization and desynchronization in Sections \ref{sec:SyncAnalysis} and \ref{sec:DesyncAnalysis}, respectively. Next, we will present physical experiments on a robotic platform in Section \ref{sec:experiments} that demonstrate the application of pulse-coupled oscillator strategies under the heading rate constraint to collective motion coordination. We will conclude with our results and future work in Section \ref{sec:conclusion}.

\section{Oscillator Preliminaries} \label{sec:prelim} 


Let us consider a network of \(N\) identical pulse-coupled oscillators. Let \(\phi_i \in \mathbb{S}^1 = [0,c_{\phi})\) be the associated phase of oscillator \(i \in \mathcal{V} = \{1,2,\cdots,N\}\), where \(\mathbb{S}^{1}\) is the one-dimensional torus and \(c_{\phi}\) is the phase threshold value for the oscillators. In \cite{EnergyEfficientSync2012, Ferrante2016, AngleaDesync2017, GaoPRCDesync2017}, \(c_{\phi}\) is chosen to be \(2\pi\), but is chosen to be \(1\) in \cite{AngleaClockSync2019, Nagpal2007, Scaglione2010}. In this paper, we will choose \(c_{\phi} = 2\pi\). Each oscillator evolves its phase at a rate \(\omega_i\). For our analysis, each oscillator has an identical fundamental frequency, \(\omega_0\) (i.e., \(\omega_{1} = \omega_{2} = \cdots = \omega_{N} = \omega_{0}\)), and evolves naturally at that rate on the interval \([0,2\pi)\).


During the normal evolution of the network, when an oscillator reaches the threshold value of \(2\pi\), it fires a pulse and resets its phase to zero. All connected oscillators then receive that pulse, being notified of the firing instance of an oscillator in the network. Receiving a pulse will cause an oscillator to change its phase according to the chosen PCO algorithm. Let us denote the amount that oscillator \(i\) adjusts its phase, \(\phi_i\), when a pulse is received as
\begin{equation} \label{eq:PhaseAdjustment}
\psi_i = F(\phi_i) = \lim_{\tau \downarrow 0} \big(\phi_i(t + \tau)\big) - \phi_i(t) = \phi_i^{+} - \phi_i
\end{equation}
where \(\phi_i\) and \(\phi_i^{+}\) represent the phase of oscillator \(i\) before and after receiving a pulse, respectively. The function \(F(\phi_i)\) is called a phase response function (PRF) or phase response curve (PRC), and is used to determine the amount that an oscillator will adjust its phase as a function of the oscillator's phase when a pulse is received \cite{izhikevich2007dynamical}. By an appropriate choice for the PRC and oscillator network topology, the phases of the oscillators in the network can converge to a state of synchronization, where all of the phases achieve the same value \cite{AngleaClockSync2019, EnergyEfficientSync2012} or a state of phase desynchronization, where the phases evenly separate from each other as much as possible \cite{Nagpal2007, Scaglione2010, Ferrante2016, GaoPRCDesync2017}

\section{Pulse-Based Heading Coordination} \label{sec:HeadingControl} 

Let \(\theta_{i}(t) \in \mathbb{S}^{1}\) represent the heading of robot \(i \in \mathcal{V}\) in a network of \(N\) identical robots. To control the heading of the mobile robot network, we propose the following PCO-inspired framework.

In addition to the variable \(\theta_{i}(t)\) to represent the heading of robot \(i\), an internal phase variable \(\phi_{i} \in [0,2\pi)\) is given to each robot such that its initial value is the initial heading of the robot, \(\theta_{i}(0)\). This auxiliary variable represents the phase of an oscillator, and evolves at the rate, or frequency, \(\omega_{i}\). Under no interactions between neighboring robots, the phase variable evolves at the fundamental frequency \(\omega_{i} = \omega_{0}\). When the value of the phase variable \(\phi_{i}\) reaches the threshold of \(2\pi\), it resets to zero, and the robot sends out, or fires, a pulse to all robots connected to it in the network.

The control for the heading of the robot, then, follows the phase variable. The rate at which the robot rotates, \(\dot{\theta}_{i}\), is given by 
\begin{equation} \label{eq:HeadingControl}
\dot{\theta}_{i} = \dot{\phi}_{i} - \omega_{0}
\end{equation}
where \(\dot{\phi_{i}} = \omega_{i}\) is the rate at which the phase variable evolves.

However, there is a physical constraint on how fast the robot can rotate. That is, \(|\dot{\theta}_{i}| \leq \omega_{\max}\) must hold for some maximum rotational rate \(\omega_{\max}\). This natural constraint requires that the heading, \(\theta_{i}(t)\), evolves continuously on the one-dimensional torus. Thus, a constraint is implied on the rate of the phase variable \(\phi_{i}\), such that 
\begin{equation} \label{eq:PhaseRateConstraint} 
|\dot{\theta}_{i}| = |\dot{\phi}_{i} - \omega_{0}| \leq \omega_{\max}
\end{equation}
holds. This implied constraint on the phase variable, \(\phi_{i}\), also requires it to evolve continuously on the one-dimensional torus.

To our knowledge, all existing literature on PCO networks, except for our previous work \cite{AngleaClockSync2019} on clock synchronization, has the phase value change instantaneously, i.e. \(|\dot{\phi}| = \infty\), at pulse firing instances. Since the heading rate constraint implies a rate constraint on the auxiliary phase, we require that the phase not change instantaneously. In this paper, we will generalize the standard PCO model approach such that the phase value is rate constrained at all times, and then apply the results to heading coordination under heading rate constraints.

To ensure the heading rate constraint in \eqref{eq:PhaseRateConstraint}, when robot \(i\) receives a pulse, we let robot \(i\) increase or decrease the rate of evolution of its phase variable, \(\omega_i\), by \(\omega_{\max}\) for a certain amount of time, \(\tau_i\), in order to achieve the required phase adjustment, \(\psi_i\), as in \eqref{eq:PhaseAdjustment}. If the robot receives another pulse before the time \(\tau_i\) is completed, then the robot will use its current phase value \(\phi_i\) to redetermine a new adjustment \(\psi_i\), and thus determine new values for \(\omega_i\) and \(\tau_i\). In fact, the amount of time, \(\tau_i\), the phase variable spends at this new rate is dependent on the phase amount \(\psi_i\) that it needs to adjust, as given in \eqref{eq:FreqMethodTime}.
\begin{equation} \label{eq:FreqMethodTime}
\tau_i = \frac{|\psi_i|}{\omega_{\max}}
\end{equation}

Thus, once an amount of phase adjustment \(\psi_i\) is determined, the phase variable will increase its rate to \(\omega_{i} = \omega_{0} + \omega_{\max}\) if \(\psi_i\) is positive, or decrease its rate to \(\omega_{i} = \omega_{0} - \omega_{\max}\) if \(\psi_i\) is negative, for time \(\tau_i\) determined in \eqref{eq:FreqMethodTime}. Once the time \(\tau_i\) has elapsed, the phase variable returns to its fundamental rate \(\omega_{0}\). If \(\psi_i\) is zero, then the phase variable remains at its fundamental rate, \(\omega_0\), and evolves until the next firing instance.

Therefore, when a robot receives a pulse, its phase variable will adjust its rate according to the rate constraint method described. That is, using the chosen phase response function \(F(\phi_{i})\), the phase variable will begin to evolve at rate \(\omega_{i}\) for time \(\tau_{i}\), and then return to evolving at the fundamental frequency \(\omega_{0}\). Thus, based on the heading control given in \eqref{eq:HeadingControl}, during the normal evolution of the phase variable, the robot does not rotate, and when a pulse is received, the robot will rotate at rate \(\dot{\theta}_{i} = \pm \omega_{\max}\) for time \(\tau_{i}\) and then stop rotating. If a new pulse is received before time \(\tau_{i}\) is complete, then the rate of rotation for the robot will update for the new amount of time \(\tau_{i}\). Fig.1 illustrates the relationship between the robot's phase variable and heading in a network with six robots in an all-to-all topology.

\begin{figure}[t] 
	\includegraphics[width=\columnwidth]{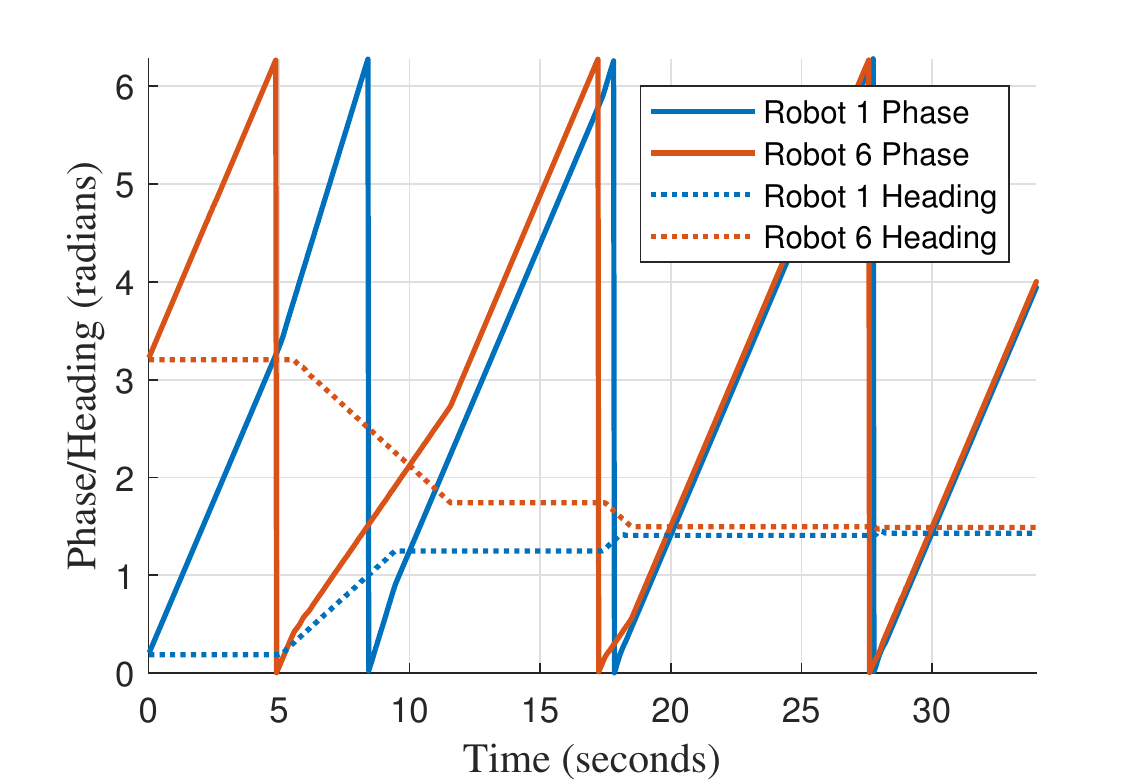}
	\centering
	\caption{Example evolution of two robots' phase and heading values in response to received pulses for a network with \(N=6\) robots in an all-to-all topology. The robot heading changes when the robot phase variable deviates from the fundamental frequency, \(\omega_0\), following the heading control given in \eqref{eq:HeadingControl}.}
	\label{fig:PhaseHeadingComparison}
\end{figure}


With the heading control given in \eqref{eq:HeadingControl}, as the values for the phase variables across all of the robots in the network achieve their desired state by following the chosen PRC, the heading of the mobile robots will also achieve the same desired state. It leaves us to show, then, that the phase variables in the network are able to achieve the desired states of synchronization and desynchronization under the heading rate constraint.

\section{Synchronization under Heading Rate Constraint} \label{sec:SyncAnalysis}

In this section, we consider the dynamics of synchronizing mobile robots under rate constrained heading evolution. All that is required is to take the amount of adjustment for phase variable \(\phi_{i}\), i.e. \(\psi_i\), and determine the necessary amount of time, \(\tau_i\), to change the frequency \(\omega_i\). We then let the phase variable evolve at the new frequency for the required amount of time before it returns to its fundamental frequency, \(\omega _0\).

\begin{figure}[t] 
	\includegraphics[width=\columnwidth]{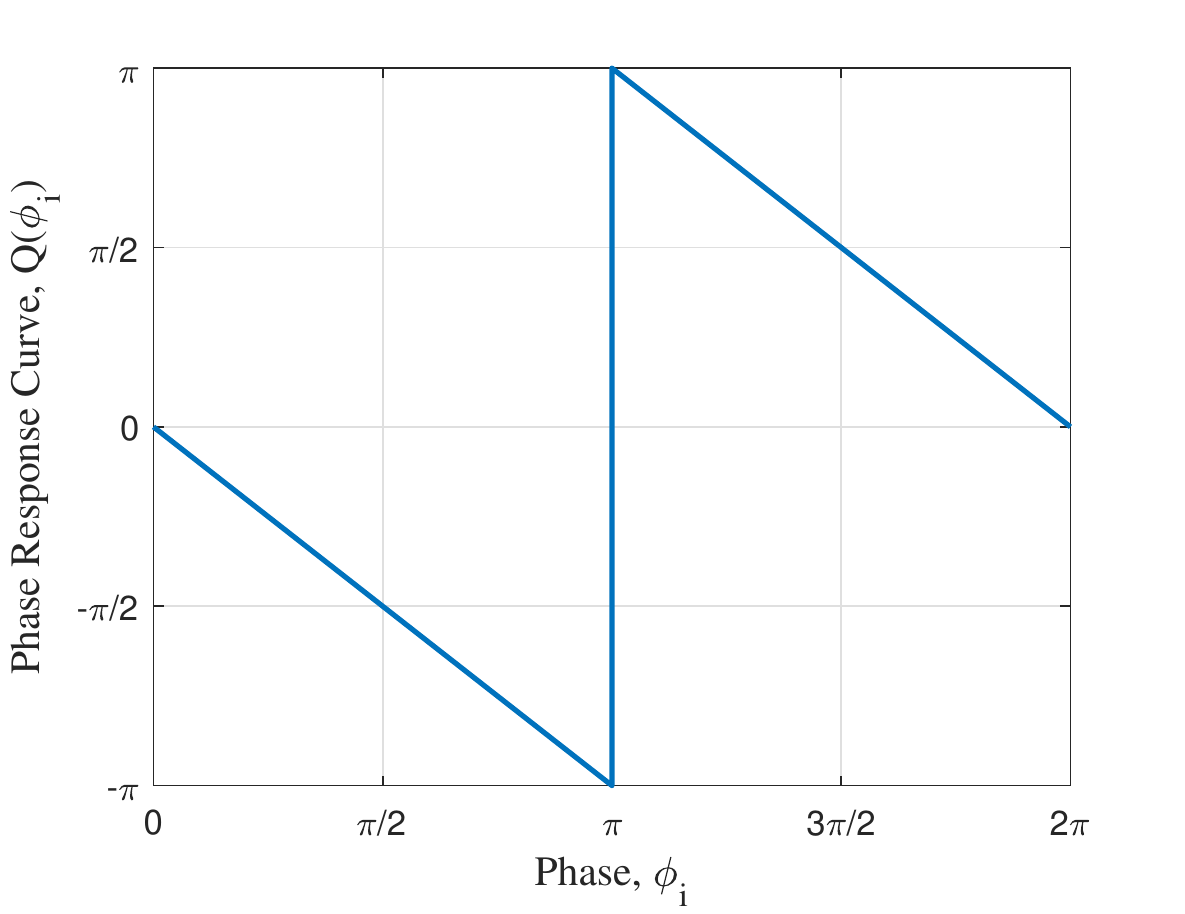}
	\centering
	\caption{The delay-advance phase response curve (PRC) given in (\ref{eq:PRCSyncQ}) for heading synchronization}
	\label{fig:PRCSync1}
\end{figure}
\begin{figure}[t] 
	\includegraphics[width=\columnwidth]{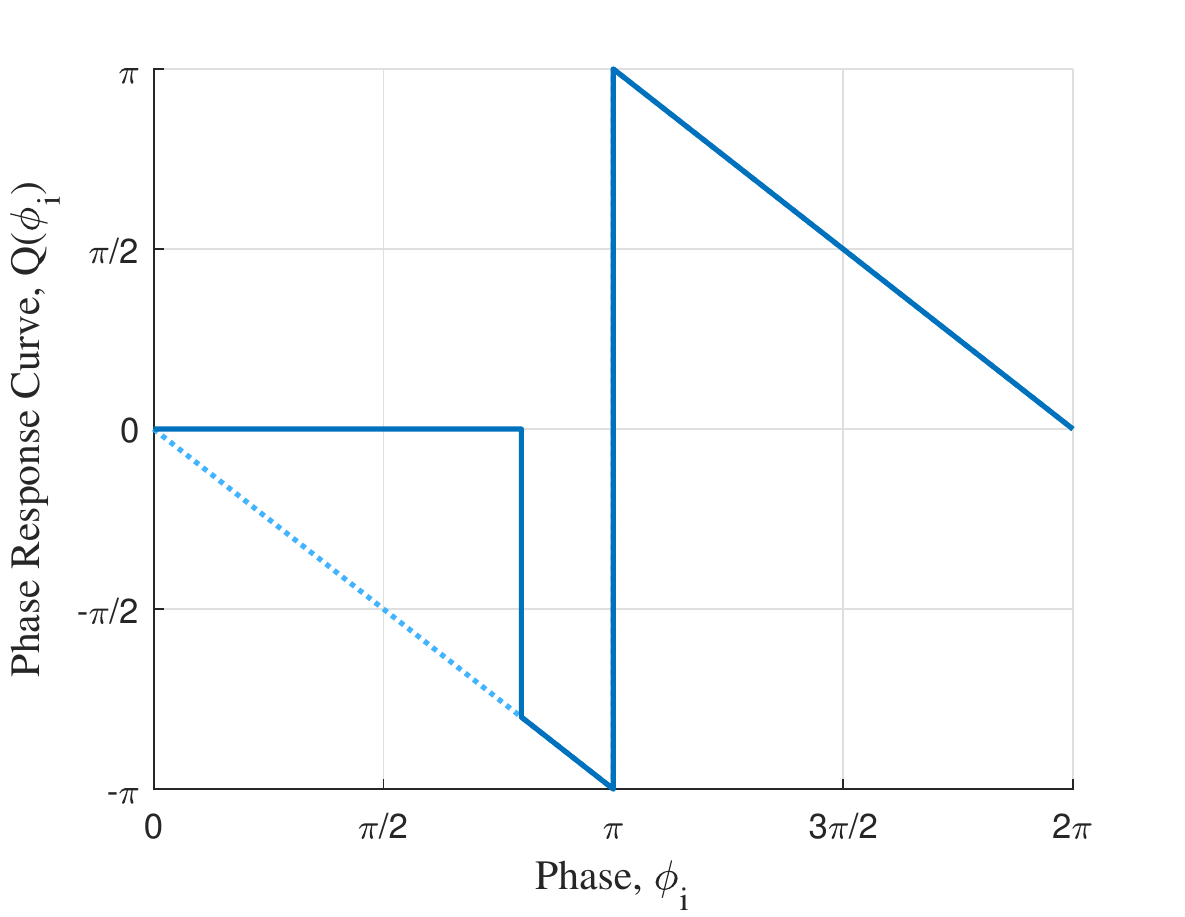}
	\centering
	\caption{The delay-advance phase response curve (PRC) given in (\ref{eq:PRCSyncQ}) for heading synchronization with an example refractory period \(D = 0.8\pi\).}
	\label{fig:PRCSyncRefractory1}
\end{figure}

\subsection{Phase Response Function} 
We will consider the pulse-coupled oscillator PRC synchronization strategy given in \cite{EnergyEfficientSync2012}. This synchronization algorithm uses a delay-advance PRC to describe the phase update at each firing instance. Specifically, when a robot receives a pulse, it updates its phase variable, as in \eqref{eq:PhaseAdjustment}, by the amount given by
\begin{equation} \label{eq:PRFSync}
F(\phi_i) = \alpha Q(\phi_i)
\end{equation}
where
\begin{equation} \label{eq:PRCSyncQ}
Q(\phi_i) = \begin{cases} 
- \phi_{i} & 0 \leq \phi_{i} \leq \pi \\
2\pi - \phi_{i} & \pi < \phi_{i} < 2\pi
\end{cases}
\end{equation}
and \(\alpha \in (0,1]\) is the coupling strength of the network. This PRC is illustrated in Fig. \ref{fig:PRCSync1}. Additionally, a (non-zero) refractory period of duration \(D \in [0, 2\pi)\) can be included in the PRC, such that a robot does not respond to any incoming pulses when its phase variable is within the region of \([0,D)\), and continues to evolve freely, as illustrated in Fig. \ref{fig:PRCSyncRefractory1}. Including this refractory period has been shown to improve energy efficiency within the network \cite{EnergyEfficientSync2012} and to be robust to communication latency \cite{ScalableSync2005}, making this PRC approach a good candidate for achieving synchronization within a mobile robot network.

Without loss of generality, we can order the robots by their initial phase variables such that robot \(1\) has the smallest phase and robot \(N\) the largest (i.e., \(0 \leq \phi_{1} \leq \cdots \leq \phi_{N} < 2\pi\)). To quantify the synchronization within the network, an arc is defined as a connected subset of the interval \([0,2\pi)\). We can thus define the following set of functions, \(v_{i,i+1}\), for all \(i \in \mathcal{V}\):
\color{blue}
\begin{equation} \label{eq:PhaseDiffArcs}
v_{i,i+1}(\phi) = \left\{ \begin{array}{ll}
\phi_{i+1} - \phi_{i} & \mbox{if \(\phi_{i+1} \geq \phi_{i}\)}\\
2\pi - (\phi_{i} - \phi_{i+1}) & \mbox{if \(\phi_{i+1} < \phi_{i}\)} \end{array} \right.
\end{equation}
\color{black}
where robot \(N+1\) maps to robot \(1\). We say that the containing arc of the phase variables is the smallest arc that contains all of the phases in the network. The length of this arc, \(\Lambda\), is given mathematically as

\begin{equation} \label{eq:ContainingArc}
\Lambda = 2\pi - \max_{i \in \mathcal{V}} \{v_{i,i+1}(\phi)\}
\end{equation}

When the phase variables in the network are synchronized, and have the same value, the containing arc, \(\Lambda\), is zero.

\subsection{Synchronization with Heading Rate Constraint}
As shown in \cite{AngleaClockSync2019}, the PRC in \eqref{eq:PRFSync} is capable of achieving synchronization in a connected PCO network under the implied phase rate constraint as described in Sec. \ref{sec:HeadingControl}, where it is possible to model the behavior of the network using a time-varying coupling strength for each phase variable in the network. Specifically, if a new pulse arrives at time \(t_{0} < \tau_{i}\), then the phase variable has only achieved a fraction of the amount of the desired phase change before it responds to the new pulse. This fractional amount of the desired phase change can be seen as a reduction of the coupling strength, \(\alpha\), of robot \(i\) by the ratio \(\frac{t_{0}}{\tau{i}}\), and thus the effective coupling is bounded in the interval \((0,\alpha]\).

We can apply this result to the control of mobile robots.

\begin{theorem} \label{thr:PhaseSyncControl}
	Let \(N\) mobile robots, each with heading \(\theta_{i}\), be in a (strongly) connected network topology, such that the containing arc of the headings is less than some \(\bar{\Lambda} \in (0,\pi]\). Then using the phase response curve given in \eqref{eq:PRFSync} with \(\alpha \in (0,1]\) and refractory period \(D\) not greater than \textcolor{blue}{\(2\pi - \bar{\Lambda}\)}, the heading control given in \eqref{eq:HeadingControl} will synchronize the headings of the robots with the heading rate constraint given in \eqref{eq:PhaseRateConstraint}.
\end{theorem}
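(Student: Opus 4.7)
The plan is to reduce heading synchronization to phase synchronization and then import the rate-constrained PCO analysis already established in \cite{AngleaClockSync2019}. First, I would exploit the control law in \eqref{eq:HeadingControl}: since $\dot{\theta}_i = \dot{\phi}_i - \omega_0$, integrating gives $\theta_i(t) - \theta_i(0) = \phi_i(t) - \phi_i(0) - \omega_0 t$, so pairwise heading differences equal pairwise phase-variable differences for all $t$. Combined with the initialization $\phi_i(0) = \theta_i(0)$, this means the containing arc of the headings coincides with the containing arc of the phase variables at every instant, and heading synchronization is equivalent to phase synchronization of the $\phi_i$'s. Consequently, it suffices to prove that the $N$ phase variables synchronize under the delay-advance PRC \eqref{eq:PRFSync}--\eqref{eq:PRCSyncQ} with the rate constraint \eqref{eq:PhaseRateConstraint} and the given refractory period.

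Next, I would invoke the reinterpretation from Section \ref{sec:HeadingControl}: applying the frequency perturbation over a duration $\tau_i = |\psi_i|/\omega_{\max}$ is equivalent, at the moment the next relevant pulse arrives at time $t_0 \leq \tau_i$, to having applied an instantaneous jump of magnitude $(t_0/\tau_i)\psi_i$. This lets me recast the rate-constrained network as a standard (instantaneous-jump) PCO network in which each oscillator $i$ uses a time-varying effective coupling strength $\alpha_i^{\mathrm{eff}}(t) \in (0, \alpha]$. The point is that the analytical framework of \cite{EnergyEfficientSync2012, AngleaClockSync2019}---which establishes monotonic non-expansion of the containing arc for any coupling in $(0,1]$---remains applicable pointwise in time, even though the effective coupling wanders.

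With that setup, the core synchronization argument proceeds in three steps. Step one: show that the containing arc $\Lambda$ is non-increasing. For any pulse received by robot $i$ whose phase lies outside the refractory window, the PRC \eqref{eq:PRCSyncQ} pulls $\phi_i$ toward the fired-oscillator's reset value $0 \equiv 2\pi$; the hypothesis $\bar{\Lambda} \leq \pi$ guarantees that all phases lie in a single half-circle containing the fired oscillator, so the delay-advance branches of $Q$ never overshoot and the arc endpoints can only move inward, even when only a fraction of the scheduled $\psi_i$ has been realized. Step two: verify that the refractory condition $D \leq 2\pi - \bar{\Lambda}$ is exactly what prevents any responsive oscillator's phase from being pushed into the refractory dead zone or straddling the wraparound in a way that could enlarge the containing arc. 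Step three: show strict contraction. Since the graph is strongly connected and every oscillator eventually fires, repeated firings by the leading oscillator produce a strictly positive reduction of $\Lambda$ over each full ``cycle'' of firings, yielding $\Lambda(t) \to 0$ asymptotically, which by the equivalence above gives heading synchronization.

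The main obstacle I anticipate is step one in its continuous-time form: because the phase adjustment is spread over $\tau_i$ and can be preempted by newly arriving pulses, I cannot simply read off the arc evolution from a single discrete jump as in the classical proof. I would handle this by defining an auxiliary arc function that tracks, at each instant, the ``committed'' phase value each oscillator is heading toward under its currently active $(\omega_i, \tau_i)$ program, and showing that this committed arc is monotonically non-increasing across both the continuous evolution intervals and the instants of pulse reception. Checking that preemption by another pulse never increases the committed arc---given the bounds on $\bar{\Lambda}$ and $D$---is the delicate calculation, but once it is in place the remainder of the argument reduces to the standard PCO synchronization analysis with coupling strength replaced by its time-varying effective counterpart.
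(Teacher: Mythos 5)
Your proposal takes essentially the same route as the paper's own proof: the paper likewise reduces heading synchronization to synchronization of the auxiliary phase variables via \(\theta_{i} = (\phi_{i} - \omega_{0}t)\bmod{2\pi}\) with \(\phi_i(0)=\theta_i(0)\), notes that the initial phases then lie in the same containing arc \(\bar{\Lambda} \leq \pi\), and concludes by invoking the rate-constrained PCO synchronization result of \cite{AngleaClockSync2019} for \(\alpha \in (0,1]\) and \(D \leq 2\pi - \bar{\Lambda}\), which rests on exactly the time-varying effective-coupling reinterpretation you describe. The only difference is that the paper stops at the citation, whereas you go on to sketch the internals of that cited argument (arc non-expansion, the role of the refractory bound, and strict contraction over firing cycles), which is consistent with, but not required by, the paper's proof.
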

\begin{proof}
	The proof for this theorem follows from the results found in \cite{AngleaClockSync2019}. By using the heading control given in \eqref{eq:HeadingControl} such that the initial value for the phase variable \(\phi_{i}\) is set to be \(\theta_{i}(0)\), we have that \(\theta_{i} = (\phi_{i} - \omega_{0}t)\bmod{2\pi}\) for \(i\in \mathcal{V}\). Thus, to show that the headings of the robots \(\theta_{1},\dots,\theta_{N}\) achieve the state of synchronization, we only need to show that the phase variables \(\phi_{1},\dots,\phi_{N}\) achieve the state of synchronization.
	
	Since the initial headings of the robots are in a containing arc \(\bar{\Lambda} \leq \pi\), the initial phase variable values will also be in a containing arc \(\bar{\Lambda} \leq \pi\). From the results in \cite{AngleaClockSync2019}, the phase variables following the phase response curve given in \eqref{eq:PRFSync} under the implied phase rate constraint given in \eqref{eq:PhaseRateConstraint} will synchronize for \(\alpha \in (0,1]\) and refractory period \(D\) not greater than \textcolor{blue}{\(2\pi - \bar{\Lambda}\)}. Therefore, the headings of the robots \(\theta_{1},\dots,\theta_{N}\) will achieve synchronization.
\end{proof}


\section{Desynchronization under Heading Rate Constraint} \label{sec:DesyncAnalysis} 

\begin{figure}[t] 
	\includegraphics[width=\columnwidth]{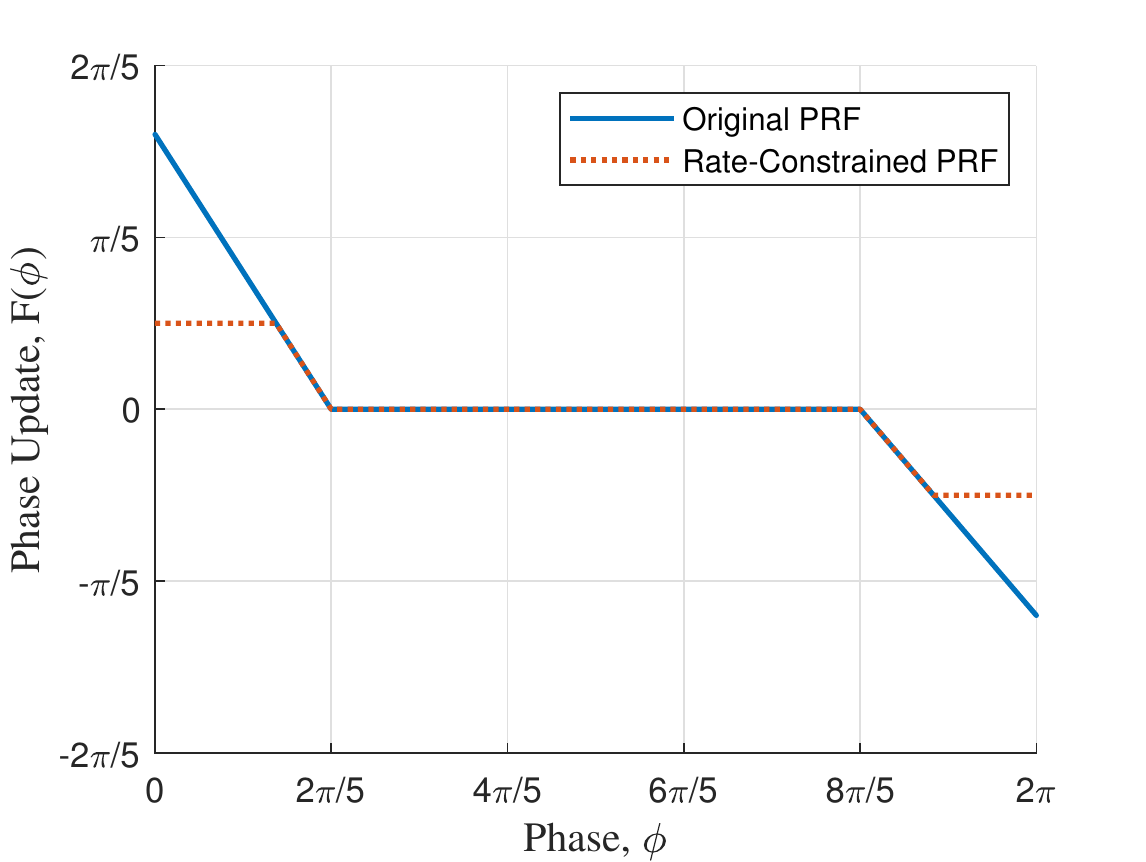}
	\centering
	\caption{Phase response functions (PRFs) to achieve desynchronization under the heading rate constraint. (\(N = 5\), \(l_1 = 0.8\), \(l_2 = 0.6\), \(\omega_0 = 2\pi\), \(\omega_{\max} = 0.5 \omega_0\), \(t_0 = 0.1\) seconds)}
	\label{fig:PRC_continuity}
\end{figure}

\begin{figure}[t] 
	\includegraphics[width=\columnwidth]{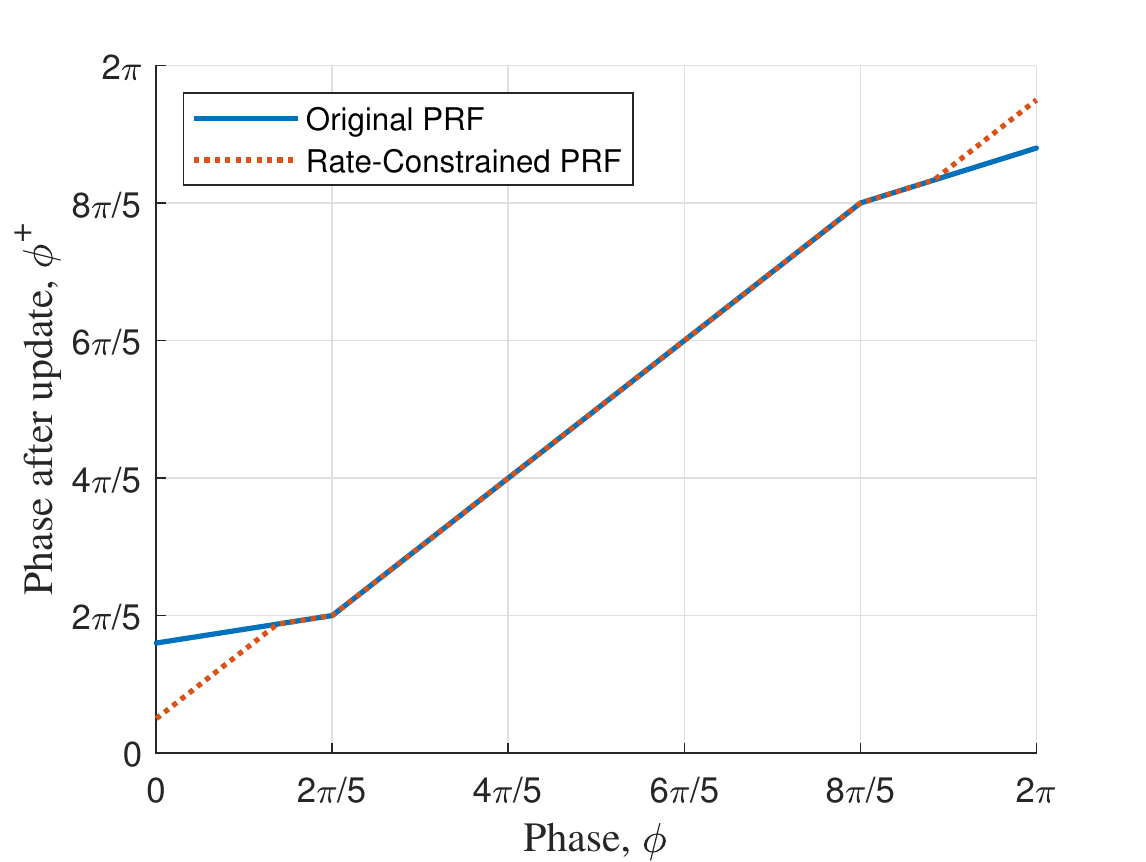}
	\centering
	\caption{Phase update rule based on phase response functions (PRFs) in Fig. \ref{fig:PRC_continuity}. (\(N = 5\), \(l_1 = 0.8\), \(l_2 = 0.6\), \(\omega_0 = 2\pi\), \(\omega_{\max} = 0.5 \omega_0\), \(t_0 = 0.1\) seconds)}
	\label{fig:PhaseUpdate}
\end{figure}

We now rigorously analyze the dynamics of desynchronizing mobile robots with different initial headings under rate constrained heading evolution. As in the previous section on synchronization, all that is required is to take the amount of adjustment for phase variable \(\phi_{i}\), i.e. \(\psi_i\), and determine the necessary amount of time, \(\tau_i\), to change the frequency \(\omega_i\). We then let the phase variable evolve at the new frequency for the required amount of time before it returns to its fundamental frequency, \(\omega _0\).

\subsection{Phase Response Function}
Phase desynchronization is defined to be the network state when all phases variables are spread out evenly such that the difference between any two neighboring phases is exactly \(\frac{2\pi}{N}\). As shown in \cite{GaoPRCDesync2017}, when instantaneous phase changes are allowed, the phase response function (PRF) given by
\begin{equation} \label{eq:PRFGao}
F(\phi_i) = \begin{cases} 
l_{1}(\frac{2\pi}{N} - \phi_{i}) & 0 < \phi_{i} < \frac{2\pi}{N} \\
0 & \frac{2\pi}{N} \leq \phi_{i} \leq 2\pi\frac{N-1}{N} \\
l_{2}(2\pi\frac{N-1}{N} - \phi_{i}) & 2\pi\frac{N-1}{N} < \phi_{i} < 2\pi
\end{cases}
\end{equation}
where \(l_{1} \in [0,1)\) and \(l_{2} \in [0,1)\) are constants denoting the coupling strength such that \(l_{1}\) and \(l_{2}\) are not both zero simultaneously, can guarantee phase desynchronization in an all-to-all connection topology. The function in \eqref{eq:PRFGao} is illustrated with the solid blue line in Fig. \ref{fig:PRC_continuity}. Thus, from \eqref{eq:PhaseAdjustment}, a phase update rule can be written as
\begin{multline} \label{eq:PhaseUpdateGao}
\phi_i^{+} = \phi_i + F(\phi_i) \\ = \begin{cases} 
(1 - l_{1})\phi_{i} + l_{1}\frac{2\pi}{N} & 0 < \phi_{i} < \frac{2\pi}{N} \\
\phi_i & \frac{2\pi}{N} \leq \phi_{i} \leq 2\pi\frac{N-1}{N} \\
(1 - l_{2})\phi_{i} + l_{2}2\pi\frac{N-1}{N} & 2\pi\frac{N-1}{N} < \phi_{i} < 2\pi
\end{cases}
\end{multline}

From \eqref{eq:PhaseUpdateGao}, \(\phi_i^{+} \in (0,2\pi)\) is a strictly monotonically increasing function of \(\phi_i \in (0,2\pi)\), as illustrated with the solid blue line in Fig. \ref{fig:PhaseUpdate}. In this paper, we will use this PRF as a basis for ensuring the phase rate constraint.

To rigorously analyze the network as it converges toward phase desynchronization, we will use the measure given in \cite{GaoPRCDesync2017} to quantify the degree of achievement toward desynchronization. Without loss of generality, we can order the robots by their phase variables such that, at time \(t = 0\), \(0 \leq \phi_N < \phi_{N-1} < \cdots < \phi_2 < \phi_1 < 2\pi\) holds. Since the phase update rule is strictly monotonically increasing, then the firing order of robots is invariant, as shown in Lemma 1 in \cite{GaoPRCDesync2017}. Thus, if robot \(i\) is the next to fire after robot \(i-1\) at time \(t=0\), then it will always be the next to fire after robot \(i-1\). Therefore, the phase variable differences between neighboring robots can always be expressed as:
\begin{equation} \label{eq:PhaseDifference}
\begin{cases} 
\Delta_{i} = (\phi_{i} - \phi_{i+1})\bmod{2\pi}, & i = 1,2,\dots,N-1 \\
\Delta_{N} = (\phi_{N} - \phi_{1})\bmod{2\pi} &  
\end{cases}
\end{equation}

Phase desynchronization implies that the phase differences between neighboring robots are equal to \(\frac{2\pi}{N}\). Thus, the following measure \(P\) is used to quantify phase desynchronization based on phase differences:

\begin{equation} \label{eq:DesyncIndex}
P \triangleq \sum_{i=1}^{N} |\Delta_{i} - \frac{2\pi}{N}|
\end{equation}

It is clear that this measure has a minimum of zero, and equals zero only when phase desynchronization is achieved in the network.

\subsection{Effects of Rate Constraints on Phase Response Function}
Let us analyze the behavior of the robot network under the heading rate constraint given in \eqref{eq:PhaseRateConstraint}. Once robot \(i\) has received a pulse and calculated the necessary change in frequency, \(\omega_i\), to achieve the phase adjustment \(\psi_i\) in time \(\tau_i\), two possibilities can follow: 1) the robot receives no new pulses within time \(\tau_i\) that cause a phase adjustment, and 2) the robot receives a new pulse 
within time \(\tau_i\) that causes a phase adjustment. If we represent the duration of the time interval between the new and previous pulse firing instances as \(t_0\), we can divide these two cases mathematically as 1) \(t_0 \geq \tau_i\), and 2) \(t_0 < \tau_i\).

\begin{enumerate}
	\item In the first case, robot \(i\) finishes adjusting its phase variable by \(\psi_i\), and returns to evolving at the fundamental frequency \(\omega_0\). The same effective change in the phase variable and robot heading has been achieved as if the robot had immediately changed its phase variable and heading by \(\psi_i\) and let the phase variable evolve at rate \(\omega_{0}\) for a time of duration \(t_0\). Thus, no effective change in the phase update rule occurs.
	
	\item In the second case, the robot has not yet achieved its desired amount of change in the phase variable. Rather than having adjusted the whole amount \(\psi_i\), it has adjusted only a portion of that amount, \(\frac{t_0}{\tau_i}\psi_i\), in the time interval between received pulses. The robot then will use its current phase value at the time when the new pulse is received to redetermine new values for \(\psi_i\) and \(\tau_i\). 
	This truncated phase evolution is equivalent to having the robot immediately change its phase variable and heading by \(\frac{t_0}{\tau_i}\psi_i\) and letting the phase variable evolve at rate \(\omega_{0}\) for a time of duration \(t_0\). This fractional amount of the desired change can be seen as a reduction of the coupling strength, \(l_{1}\) or \(l_{2}\), of the PRF for robot \(i\) by the ratio \(\frac{t_0}{\tau_i}\).
\end{enumerate}

We now consider how \eqref{eq:PRFGao} is affected by this behavior caused by the heading rate constraint given in \eqref{eq:PhaseRateConstraint}.
%
The maximum amount of time needed to make the desired phase adjustment is \(\tau_{\max} = \max \{l_{1},l_{2}\} \frac{2\pi}{\omega_{\max} N}\). If, as in the first case above, there are no new pulses within time \(\tau_{\max}\), then there is no effective change to the behavior of the network. Similarly, if the required time, \(\tau_i\), for robot \(i\) to complete the phase adjustment, \(\psi_i\), is smaller than \(t_0\), then there is no effective change to the behavior of that robot. Otherwise, if the required phase adjustment, \(\psi_i\), is larger than \(\omega_{\max} t_0\) (i.e., if \(\tau_i > t_0\)), then each robot with a phase variable \(\phi_i \in (0,\frac{2\pi}{N} - \frac{\omega_{\max} t_0}{l_{1}}) \cup (2\pi\frac{N-1}{N} + \frac{\omega_{\max} t_0}{l_{2}},2\pi)\) when the previous pulse was fired only completes a phase adjustment of \(\omega_{\max} t_0\), a fractional amount of the required adjustment. Thus, for that previous pulse, we can describe the behavior of the network with a PRF given by 
\begin{subequations} \label{eq:PRFFreqMethodAlternate}
\begin{equation} \label{eq:PRF_FreqMethod}
F_{\omega}(\phi_i) = \begin{cases} 
L_{1}(\phi_i)(\frac{2\pi}{N} - \phi_{i}) & 0 < \phi_{i} < \frac{2\pi}{N} \\
0 & \frac{2\pi}{N} \leq \phi_{i} \leq 2\pi\frac{N-1}{N} \\
L_{2}(\phi_i)(2\pi\frac{N-1}{N} - \phi_{i}) & 2\pi\frac{N-1}{N} < \phi_{i} < 2\pi
\end{cases}
\end{equation}
where the forward coupling function, \(L_{1}(\phi_i)\), and the backward coupling function, \(L_{2}(\phi_i)\), are given by
\begin{equation} \label{eq:FCF_FreqMethod} 
L_{1}(\phi_i) = \begin{cases} 
\frac{\omega_{\max} t_0}{\frac{2\pi}{N} - \phi_i} & 0 < \phi_{i} < \frac{2\pi}{N} - \frac{\omega_{\max} t_0}{l_{1}} \\
l_{1} & \frac{2\pi}{N} - \frac{\omega_{\max} t_0}{l_{1}} \leq \phi_{i} < \frac{2\pi}{N}
\end{cases}
\end{equation}
and
\begin{equation} \label{eq:BCF_FreqMethod} 
L_{2}(\phi_i) = \begin{cases} 
l_{2} & 2\pi\frac{N-1}{N} < \phi_{i} \leq 2\pi\frac{N-1}{N} + \frac{\omega_{\max} t_0}{l_{2}} \\
\frac{-\omega_{\max} t_0}{2\pi\frac{N-1}{N} - \phi_i} & 2\pi\frac{N-1}{N} + \frac{\omega_{\max} t_0}{l_{2}} < \phi_{i} < 2\pi
\end{cases}
\end{equation}
\end{subequations}
such that \(\frac{\omega_{\max} t_0}{l_{1}} < \frac{2\pi}{N}\) and \(\frac{\omega_{\max} t_0}{l_{2}} < \frac{2\pi}{N}\). This resulting PRF and the corresponding phase update rule are illustrated with red dotted lines in Fig. \ref{fig:PRC_continuity} and Fig. \ref{fig:PhaseUpdate}, respectively. Note that, in \eqref{eq:PRFFreqMethodAlternate}, \(L_1(\phi_i) \leq l_1\) holds for \(\phi_i \in (0,\frac{2\pi}{N})\), and \(L_2(\phi_i) \leq l_2\) holds for \(\phi_i \in (2\pi\frac{N-1}{N},2\pi)\). Also, though the forward and backward coupling strengths, \(L_1(\phi_i)\) and \(L_2(\phi_i)\), are not constant over their respective domains, the resulting phase update rule, as illustrated in Fig. \ref{fig:PhaseUpdate}, is still strictly monotonically increasing.


\begin{remark} \label{r:RateConstraintApplication}
	The result of applying the heading rate constraint to \eqref{eq:PRFGao} can be interpreted as a PRF with potentially reduced coupling strength functions that are dependent on the value of the phase variable, and that are time-varying based on the timing between received pulses.
\end{remark}

\begin{remark} \label{r:CouplingNotAdjusted}
	The modification to the PRF due to the heading rate constraint described above does not actually adjust the coupling strengths as the network evolves. Only the apparent behavior of the network is being modeled as a PRF with reduced coupling strengths. The actual coupling strengths, \(l_{1}\) and \(l_{2}\), remain unchanged throughout the entire evolution of the network.
\end{remark}

\subsection{Analysis of Phase Response Functions with Time-Varying Coupling Strength Functions}
We now desire to rigorously prove that a network operating under the time-varying phase response function of the form in \eqref{eq:PRFFreqMethodAlternate}, and thus under the heading rate constraint, can achieve desynchronization. To that end, we must prove some preliminary results.

Consider a PRF in the form
\begin{equation} \label{eq:PRFGeneral}
F(\phi_i) = \begin{cases} 
L_{1}(\phi_i)(\frac{2\pi}{N} - \phi_{i}) & 0 < \phi_{i} < \frac{2\pi}{N} \\
0 & \frac{2\pi}{N} \leq \phi_{i} \leq 2\pi\frac{N-1}{N} \\
L_{2}(\phi_i)(2\pi\frac{N-1}{N} - \phi_{i}) & 2\pi\frac{N-1}{N} < \phi_{i} < 2\pi
\end{cases}
\end{equation}
and corresponding phase update rule
\begin{multline} \label{eq:PhaseUpdateGeneral}
\phi_i^{+} = \phi_i + F(\phi_i) \\ = \begin{cases} 
(1 - L_{1}(\phi_{i}))\phi_{i} + L_{1}(\phi_{i})\frac{2\pi}{N} & 0 < \phi_{i} < \frac{2\pi}{N} \\
\phi_i & \frac{2\pi}{N} \leq \phi_{i} \leq 2\pi\frac{N-1}{N} \\
(1 - L_{2}(\phi_{i}))\phi_{i} + L_{2}(\phi_{i})2\pi\frac{N-1}{N} & 2\pi\frac{N-1}{N} < \phi_{i} < 2\pi
\end{cases}
\end{multline}
such that the forward coupling function, \(L_{1}(\phi_i) \in [0,1)\) over the domain \((0,\frac{2\pi}{N})\), and the backward coupling function, \(L_{2}(\phi_i) \in [0,1)\) over the domain \((2\pi\frac{N-1}{N},2\pi)\), both of which may be time-varying, cause the phase update rule in \eqref{eq:PhaseUpdateGeneral} to be strictly increasing over the entire interval \([0,2\pi)\) for all time.

\begin{remark} \label{r:PRFAlterateExpression}
	The PRF given in \eqref{eq:PRFGao} can be written in the form of \eqref{eq:PRFGeneral}, where \(L_{1}(\phi_i) = l_{1}\) and \(L_{2}(\phi_i) = l_{2}\). 
	Additionally, the PRF given in \eqref{eq:PRF_FreqMethod} can be written in the form of \eqref{eq:PRFGeneral}, where \(L_{1}(\phi_{i})\) and \(L_{2}(\phi_{i})\) are given by \eqref{eq:FCF_FreqMethod} and \eqref{eq:BCF_FreqMethod} respectively.
\end{remark}


We must first show that a robotic network under a PRF in the form of \eqref{eq:PRFGeneral} can achieve phase desynchronization. Let us initially consider a robotic network that allows instantaneous change in the phase variable such that the coupling strength functions \(L_{1}(\phi_{i})\) and \(L_{2}(\phi_{i})\) are fixed with respect to time and only vary with respect to the value of the phase variable.

\begin{lemma} \label{lem:PhaseDesyncGeneral}
	For a network of \(N\) robots with no two robots having equal initial heading, and thus equal initial phase variables, the network will achieve phase desynchronization if the phase response function \(F(\phi_i)\) is given by \eqref{eq:PRFGeneral} such that the forward coupling function, \(L_{1}(\phi_i) \in [0,1)\), over the domain \((0,\frac{2\pi}{N})\) and the backward coupling function, \(L_{2}(\phi_i) \in [0,1)\), over the domain \((2\pi\frac{N-1}{N},2\pi)\) cause the phase update rule in \eqref{eq:PhaseUpdateGeneral} to be strictly monotonically increasing.
\end{lemma}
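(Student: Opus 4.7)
The plan is to adapt the convergence argument of \cite{GaoPRCDesync2017} from constant coupling strengths $l_1, l_2$ to the phase-dependent coupling functions $L_1(\phi_i), L_2(\phi_i)$ of \eqref{eq:PRFGeneral}, using the desynchronization measure $P$ in \eqref{eq:DesyncIndex} as a Lyapunov-like quantity. First I would verify that the firing order of the robots is invariant: since the phase update rule \eqref{eq:PhaseUpdateGeneral} is strictly monotonically increasing on $[0, 2\pi)$ by hypothesis, no pair of phases can swap cyclic order at a firing event, so the ordering preservation of Lemma~1 in \cite{GaoPRCDesync2017} carries over. This lets me track the neighboring gaps $\Delta_i$ of \eqref{eq:PhaseDifference} consistently over time.

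Next I would analyze the effect of a single firing event on the vector of gaps. When a robot fires and resets to $0$, every other robot receives the pulse and updates its phase via \eqref{eq:PhaseUpdateGeneral}; robots whose phase lies in the middle region $[2\pi/N, 2\pi(N-1)/N]$ are unaffected, the robot immediately ahead in the cyclic order (phase in $(0, 2\pi/N)$) is pulled toward $2\pi/N$, and the robot immediately behind (phase in $(2\pi(N-1)/N, 2\pi)$) is pulled toward $2\pi(N-1)/N$. Writing the update as a convex combination, each of the two affected gaps $\Delta_i$ becomes a convex combination of its previous value and $2\pi/N$ with weight $1 - L_j(\phi_i) \in (0, 1]$. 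Hence $|\Delta_i - 2\pi/N|$ is non-increasing at every firing event, so $P$ is globally non-increasing along trajectories; combined with $P \geq 0$ this yields convergence of $P$.

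Finally I would invoke a LaSalle-style invariance argument to conclude $P \to 0$, and this is where the main obstacle lies. In \cite{GaoPRCDesync2017} the fixed constants $l_1, l_2$ (with at least one positive) force strict contraction at every firing event whose receiver lies in a coupling region, but here $L_1(\phi_i)$ or $L_2(\phi_i)$ can in principle vanish at specific phase values, so isolated firing events may contribute zero decrease of $P$. I would handle this by combining the strict-monotonicity hypothesis on \eqref{eq:PhaseUpdateGeneral}, which together with the standing assumption that the coupling functions are not identically zero on their domains (the analogue of $l_1, l_2$ not both zero in \cite{GaoPRCDesync2017}) guarantees that each $L_j$ is positive on a non-trivial subset, with the fact that all phases drift through $[0, 2\pi)$ cyclically at the fundamental rate $\omega_0$ between firings. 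This ensures that any non-desynchronized configuration must eventually generate firing events whose receiver is acted on non-trivially, producing the strict decrease of $P$ needed to close the LaSalle argument and conclude that phase desynchronization is achieved.
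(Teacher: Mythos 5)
Your overall strategy---adapting the argument of \cite{GaoPRCDesync2017} with the measure \(P\) in \eqref{eq:DesyncIndex} as a Lyapunov-like quantity and using strict monotonicity of \eqref{eq:PhaseUpdateGeneral} to preserve the firing order---is exactly the route the paper takes. However, your central contraction step is wrong as stated, in two ways. First, you assume only one robot sits in each coupling region at a firing instant, whereas in general there are \(M \geq 1\) robots with phase in \((0,\frac{2\pi}{N})\) and \(S \geq 1\) robots with phase in \((2\pi\frac{N-1}{N},2\pi)\); an active pulse then perturbs \(M+S+1\) of the gaps \(\Delta_i\), not two, and the interior gaps among those \(M\) (resp.\ \(S\)) robots change by amounts involving \emph{differences} of the coupling values \(L_{1}(\phi_{\widehat{k-i}})-L_{1}(\phi_{\widehat{k-i+1}})\), which need not have a sign. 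Second, and more fundamentally, the claim that \(|\Delta_i - \frac{2\pi}{N}|\) is non-increasing gap by gap is false: the boundary gap between the last coupled robot and the first uncoupled one, e.g.\ \(\Delta_{\widehat{k-M-1}}^{+} = \Delta_{\widehat{k-M-1}} + L_{1}(\phi_{\widehat{k-M}})(\phi_{\widehat{k-M}} - \frac{2\pi}{N})\), strictly \emph{decreases}, and if that gap was already below \(\frac{2\pi}{N}\) its contribution to \(P\) \emph{increases}. The non-positivity of \(P^{+}-P\) is a net effect obtained only after telescoping the interior terms and pairing each boundary term with the compensating quantity \(L_{1}(\phi_{\widehat{k-M}})(\phi_{\widehat{k-M}}-\frac{2\pi}{N})\) (and symmetrically \(L_{2}(\phi_{\widehat{k+S}})(2\pi\frac{N-1}{N}-\phi_{\widehat{k+S}})\)), which is precisely the Part~A/Part~B decomposition the paper carries out in \eqref{eq:DesyncIndexDifferenceFinal}. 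Without that bookkeeping your ``\(P\) is globally non-increasing because each summand is'' argument does not go through.

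Your concern in the final step about \(L_1\) or \(L_2\) vanishing at isolated phases is legitimate (the paper leans on the cited reference for the ``cannot always be zero'' conclusion and the existence of an active pulse per cycle), but resolving it does not rescue the proof, because the monotone-decrease property you want to feed into the LaSalle argument has not actually been established. To repair the proof you need to (i) allow arbitrarily many robots in each coupling region, (ii) compute all seven families of updated gaps as in \eqref{eq:PhaseDifferenceUpdate}, and (iii) show that the \emph{sum} of the resulting changes telescopes to an expression that is non-positive because \(L_{1}(\phi_{\widehat{k-M}}), L_{2}(\phi_{\widehat{k+S}}) \in [0,1)\) evaluated at the firing instant act as constants.
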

\begin{proof}
	The proof is given in Appendix \ref{ap:PhaseDesyncGeneralProof}. 
\end{proof}

\begin{remark} \label{r:CouplingRestrictions}
	The only condition placed on the forward coupling function, \(L_{1}(\phi_i)\), and the backward coupling function, \(L_{2}(\phi_i)\), is that they cause the resulting phase update rule to be strictly monotonically increasing for \(\phi \in [0,2\pi)\). 
	Any pair of functions, either continuous or discontinuous, that satisfy this condition can be used to form a suitable phase response function.
\end{remark}

Lemma \ref{lem:PhaseDesyncGeneral} proves that a robotic network with a fixed PRF as in \eqref{eq:PRFGeneral} can achieve phase desynchronization under instantaneous changes in the phase variable. We now use this result to show that a PRF that has forward and backward coupling functions that are time-varying, yet still cause the corresponding phase update rule in \eqref{eq:PhaseUpdateGeneral} to be strictly increasing over the interval \([0,2\pi)\) can be used to achieve phase desynchronization.

\begin{proposition} \label{prop:FiringInstanceCoupling}
	The evolution of the phase variable, and thus the heading, of a robot in a robotic network is dependent upon the value of the coupling strengths, \(L_{1}(\phi_i)\) and \(L_{2}(\phi_i)\), only at firing instances.
\end{proposition}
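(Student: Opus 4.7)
The plan is to establish the proposition by decomposing the continuous-time evolution of $\phi_i(t)$ into a sequence of intervals bounded by pulse-reception (firing) events, and then showing that on each such interval the dynamics are governed by an ODE whose right-hand side is determined by data fixed at the event endpoints, with no further reference to the coupling functions $L_1, L_2$ in the interior of the interval.

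First I would fix a robot $i$ and enumerate the discrete events relevant to its phase evolution: the times $\{t_k\}$ at which robot $i$ either fires its own pulse or receives a pulse from a neighbor. I would argue that, by the construction of the rate-constrained model described around \eqref{eq:FreqMethodTime}, the rate $\omega_i(t)$ is piecewise constant, taking one of three possible values in $\{\omega_0 - \omega_{\max},\,\omega_0,\,\omega_0 + \omega_{\max}\}$. Between any two consecutive events $t_k$ and $t_{k+1}$ the phase therefore obeys $\dot{\phi}_i(t) = \omega_i(t_k^{+})$, which is a constant determined at the event $t_k$. Integrating, $\phi_i(t) = \phi_i(t_k) + \omega_i(t_k^{+})\,(t - t_k)$ on $[t_k, \min(t_{k+1}, t_k + \tau_i))$, and $\phi_i(t) = \phi_i(t_k + \tau_i) + \omega_0\,(t - t_k - \tau_i)$ on $[t_k + \tau_i, t_{k+1})$ if $\tau_i < t_{k+1} - t_k$. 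Neither closed-form expression makes any reference to $L_1$ or $L_2$.

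Next I would verify that the only entry point of the coupling functions is at the firing events themselves. At each $t_k$ at which a pulse arrives, the robot computes $\psi_i = F(\phi_i(t_k))$ according to \eqref{eq:PRFGeneral}, which invokes $L_1(\phi_i(t_k))$ if $\phi_i(t_k) \in (0, 2\pi/N)$ and $L_2(\phi_i(t_k))$ if $\phi_i(t_k) \in (2\pi(N-1)/N, 2\pi)$. From $\psi_i$, the sign of $\psi_i$ determines the new rate $\omega_i(t_k^{+})$ and $\tau_i = |\psi_i|/\omega_{\max}$ sets the duration. Thus the coupling functions influence the trajectory only through the triple $(\omega_i(t_k^{+}), \tau_i, \phi_i(t_k))$ generated at $t_k$; once these are fixed, the evolution on $[t_k, t_{k+1})$ is deterministic and coupling-free.

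The one subtlety, and what I expect to be the main point requiring care, is the case where a new pulse arrives at $t_{k+1} < t_k + \tau_i$, i.e.\ before the ongoing adjustment has completed, as discussed in case~2) of Section~\ref{sec:DesyncAnalysis}. Here one might worry that $L_1, L_2$ continue to influence the dynamics during the ongoing adjustment. I would resolve this by observing that the phase value $\phi_i(t_{k+1})$ attained at the interrupting event is computed from the closed-form expression above, which contains no dependence on $L_1, L_2$; the coupling functions are only re-consulted at $t_{k+1}$ itself in order to recompute a fresh $\psi_i, \tau_i, \omega_i(t_{k+1}^{+})$. Since $t_{k+1}$ is, by definition, a firing instance, the dependence on the coupling functions remains confined to the discrete set $\{t_k\}$. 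Combining these observations over all $k$ yields the proposition.
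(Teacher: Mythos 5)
Your proposal is correct and takes essentially the same approach as the paper, whose proof is simply the one-sentence observation that the coupling strengths are only consulted when a pulse is received and hence any values they take between firing instances are unused. Your version merely formalizes this by writing out the piecewise-constant rate dynamics and the interrupted-adjustment case explicitly, which adds rigor but no new idea.
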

\begin{proof}
	The proof for this proposition is straightforward. A robot only determines the amount that its phase variable needs to change when it receives a pulse. Thus, the values of the coupling strengths are only used at firing instances. Any values the coupling strengths take between firing instances are unused and thus independent of the behavior of the network.
\end{proof}

Using the preliminary results of Lemma \ref{lem:PhaseDesyncGeneral} and Proposition \ref{prop:FiringInstanceCoupling}, we can prove the following theorem.
\begin{theorem} \label{thr:PhaseDesyncTimeVarying}
	For a network of \(N\) robots with no two robots having equal initial heading, and thus equal initial phase variables, the network will achieve phase desynchronization if the phase response function \(F(\phi_i)\) is given by \eqref{eq:PRFGeneral} such that the forward coupling function, \(L_{1}(\phi_i) \in [0,1)\), over the domain \((0,\frac{2\pi}{N})\) and the backward coupling function, \(L_{2}(\phi_i) \in [0,1)\), over the domain \((2\pi\frac{N-1}{N},2\pi)\) can vary over time, yet still cause the phase update rule in \eqref{eq:PhaseUpdateGeneral} to be strictly increasing over the entire interval \([0,2\pi)\).
\end{theorem}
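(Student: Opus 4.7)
The plan is to leverage Proposition \ref{prop:FiringInstanceCoupling} together with Lemma \ref{lem:PhaseDesyncGeneral} to reduce the time-varying case to a sequence of ``frozen'' snapshots. Since the network dynamics depend on the coupling functions $L_1(\phi_i)$ and $L_2(\phi_i)$ only at firing instances, I would treat the evolution as a discrete sequence of phase updates indexed by successive firing times $t_1, t_2, \ldots$, with each update governed by the snapshot values $L_1(\,\cdot\,;t_k)$ and $L_2(\,\cdot\,;t_k)$. At any such $t_k$, these snapshot values still lie in $[0,1)$ and, by hypothesis, still produce a strictly monotonically increasing map on $[0,2\pi)$, so the individual firing event is indistinguishable from one generated by a fixed PRF of the form covered by Lemma \ref{lem:PhaseDesyncGeneral}.

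First, I would verify that the firing order of the robots is invariant. Because the update rule in \eqref{eq:PhaseUpdateGeneral} is strictly monotonically increasing in $\phi_i$ at every firing, the same ordering argument used in \cite{GaoPRCDesync2017} for the fixed-coupling case applies: the relative order $\phi_N < \phi_{N-1} < \cdots < \phi_1$ is preserved, so the neighboring differences $\Delta_i$ in \eqref{eq:PhaseDifference} remain well-defined throughout the evolution, and the desynchronization measure $P$ in \eqref{eq:DesyncIndex} is defined at all times.

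Next, I would transport the monotone decrease of $P$ from Lemma \ref{lem:PhaseDesyncGeneral} pointwise to each firing instance. The proof of Lemma \ref{lem:PhaseDesyncGeneral} examines how a single firing at phase $\phi_i$ changes $P$, and the only facts it uses are that $L_1(\phi_i), L_2(\phi_i) \in [0,1)$ and that the resulting update rule is strictly monotonic. Since these facts hold at every $t_k$, even though the particular values $L_1(\,\cdot\,;t_k)$ and $L_2(\,\cdot\,;t_k)$ may differ from firing to firing, the inequality $P(t_k^+) \leq P(t_k^-)$ continues to hold at each step. A LaSalle-type invariance argument analogous to the one in Lemma \ref{lem:PhaseDesyncGeneral} then drives $P \to 0$.

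The hard part will be establishing the \emph{strict} decrease of $P$ away from the desynchronized configuration, since at some $t_k$ the snapshot $L_1$ or $L_2$ could momentarily be very small or zero and produce no net change in $P$. I would address this by noting that the global strict monotonicity of the update rule over $[0,2\pi)$ forbids both coupling functions from being persistently null; along the trajectory, one can therefore extract an infinite subsequence of firings at which a nonzero adjustment occurs and at which the contraction estimate from Lemma \ref{lem:PhaseDesyncGeneral} applies. Combining this with the invariant firing order and the boundedness of trajectories on the compact set $[0,2\pi)^N$ lets a standard $\omega$-limit/invariant-set argument conclude that the trajectory converges to the unique configuration with $\Delta_i = 2\pi/N$ for all $i$, which is phase desynchronization as required.
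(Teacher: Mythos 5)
Your proposal takes essentially the same route as the paper: invoke Proposition \ref{prop:FiringInstanceCoupling} to reduce the time-varying case to a sequence of firing instances, and observe that at each such instance the hypotheses of Lemma \ref{lem:PhaseDesyncGeneral} are satisfied. You supply considerably more detail than the paper's two-sentence argument (firing-order invariance, the per-firing decrease of \(P\), and the strictness issue, which the paper delegates entirely to Lemma \ref{lem:PhaseDesyncGeneral} and its source), but the underlying idea is identical and correct.
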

\begin{proof}
	The proof for this theorem is straightforward. From Proposition \ref{prop:FiringInstanceCoupling}, we only need to consider the phase response function at firing instances. At each firing instance, the phase response function fulfills the conditions for Lemma \ref{lem:PhaseDesyncGeneral}. Therefore, the network will achieve phase desynchronization.
\end{proof}


\subsection{Desynchronization with Heading Rate Constraint}
As shown previously, for a robotic network following the control given in \eqref{eq:HeadingControl}, the effect of the heading rate constraint to the standard PRF in \eqref{eq:PRFGao} can be modeled as a PRF in the form of \eqref{eq:PRFGeneral} such that the forward coupling function and the backward coupling function cause the phase update rule in \eqref{eq:PhaseUpdateGeneral} to be strictly increasing over the entire interval \([0,2\pi)\) at each firing instance. Thus, we can present the following main theoretical result.

\begin{theorem} \label{thr:PhaseDesyncControl}
	Let \(N\) mobile robots, each with heading \(\theta_{i}\), be connected in an all-to-all network topology with no two robots having equal initial heading. Then, with the phase response function given by \eqref{eq:PRFGao} for \(l_{1} \in [0,1)\) and \(l_{2} \in [0,1)\), the heading control given in \eqref{eq:HeadingControl} will desynchronize the headings of the robots with the heading rate constraint given in \eqref{eq:PhaseRateConstraint}.
\end{theorem}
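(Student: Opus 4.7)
The plan is to reduce Theorem \ref{thr:PhaseDesyncControl} to the already-established Theorem \ref{thr:PhaseDesyncTimeVarying} by showing that (i) heading desynchronization is equivalent to phase-variable desynchronization for robots using the controller \eqref{eq:HeadingControl}, and (ii) the rate-constrained execution of PRF \eqref{eq:PRFGao} realizes, at every firing instance, an effective PRF fitting the hypotheses of Theorem \ref{thr:PhaseDesyncTimeVarying}.

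First, following the same identification used in the proof of Theorem \ref{thr:PhaseSyncControl}, I would initialize each auxiliary phase with $\phi_i(0) = \theta_i(0)$. Integrating \eqref{eq:HeadingControl} then gives $\theta_i(t) = (\phi_i(t) - \omega_0 t)\bmod 2\pi$ for every $i \in \mathcal{V}$ and every $t \geq 0$. Because the same time-dependent offset $\omega_0 t$ is subtracted from every phase, all pairwise phase-variable differences coincide exactly with the corresponding heading differences on the one-dimensional torus. In particular, the desynchronization measure $P$ in \eqref{eq:DesyncIndex} built from \eqref{eq:PhaseDifference} is numerically identical whether computed from phases or from headings. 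Thus, proving that the phase variables desynchronize automatically yields $\Delta_i = 2\pi/N$ for the headings as well, and the no-two-equal-initial-heading hypothesis transfers directly to no-two-equal-initial-phase.

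Second, I would invoke the derivation preceding the theorem: under \eqref{eq:HeadingControl} with rotation bound $\omega_{\max}$, each ideal instantaneous jump $F(\phi_i)$ prescribed by \eqref{eq:PRFGao} is replaced by a finite-rate ramp of duration $\tau_i$ given in \eqref{eq:FreqMethodTime}. By the case analysis in Section \ref{sec:DesyncAnalysis}, the aggregate effect between two consecutive pulses received by robot $i$, parameterized by the inter-pulse gap $t_0$, is exactly the PRF \eqref{eq:PRFFreqMethodAlternate} with coupling functions \eqref{eq:FCF_FreqMethod}-\eqref{eq:BCF_FreqMethod}. This PRF has the form \eqref{eq:PRFGeneral}, with $L_1(\phi_i),L_2(\phi_i) \in [0,1)$ on their respective domains (cf.\ Remark \ref{r:PRFAlterateExpression}), and by the observation made after \eqref{eq:BCF_FreqMethod} and illustrated in Fig.\ \ref{fig:PhaseUpdate}, the corresponding phase update rule \eqref{eq:PhaseUpdateGeneral} is strictly monotonically increasing on $[0,2\pi)$. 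The coupling functions are time-varying through their dependence on $t_0$, but by Proposition \ref{prop:FiringInstanceCoupling} only their values at firing instances matter. All hypotheses of Theorem \ref{thr:PhaseDesyncTimeVarying} are therefore satisfied for the all-to-all topology, so the phase variables desynchronize, and by the first step the headings do as well.

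The main obstacle here is the second step, specifically verifying that the rate-constrained dynamics really are captured by the reduced-coupling template \eqref{eq:PRFFreqMethodAlternate} at \emph{every} pulse, not merely under a one-shot nominal $t_0$. What I would check carefully is that the small-$t_0$ sub-case implicit in \eqref{eq:FCF_FreqMethod}-\eqref{eq:BCF_FreqMethod}, where $\frac{\omega_{\max} t_0}{l_1} < \frac{2\pi}{N}$ and $\frac{\omega_{\max} t_0}{l_2} < \frac{2\pi}{N}$, covers all transient configurations the network can enter; the arguments in Section V-B must be applied recursively, since after a truncated ramp the robot redetermines $\psi_i$ from the new phase and a new $t_0$ begins. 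Once this is consolidated, so that the strict monotonicity of \eqref{eq:PhaseUpdateGeneral} is preserved at each firing instance regardless of the pulse history, the conclusion follows mechanically from Theorem \ref{thr:PhaseDesyncTimeVarying} combined with the heading--phase equivalence above.
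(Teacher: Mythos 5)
Your proposal is correct and follows essentially the same route as the paper's own proof: reduce to Theorem \ref{thr:PhaseDesyncTimeVarying} via the identification $\theta_i = (\phi_i - \omega_0 t)\bmod 2\pi$, then argue that the rate-constrained execution of \eqref{eq:PRFGao} yields an effective PRF of the form \eqref{eq:PRFGeneral} with time-varying coupling functions preserving strict monotonicity of \eqref{eq:PhaseUpdateGeneral}. The recursive-application caveat you raise about the small-$t_0$ sub-case is a fair point of rigor, but the paper handles it no more carefully than you do, simply asserting the Section V-B modeling holds at each firing instance.
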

\begin{proof}
	The proof for this theorem follows directly from the proof for Theorem \ref{thr:PhaseDesyncTimeVarying}. By using the heading control given in \eqref{eq:HeadingControl} such that the initial value for the phase variable \(\phi_{i}\) is set to be \(\theta_{i}(0)\), we have \(\theta_{i} = (\phi_{i} - \omega_{0}t)\bmod{2\pi}\) for \(i\in \mathcal{V}\). Since no two initial headings of the robots are identical, the initial phase variables for each robot will also be non-identical. Thus, to show that the headings of the robots \(\theta_{1},\dots,\theta_{N}\) achieve the state of phase desynchronization, we only need to show that the phase variables \(\phi_{1},\dots,\phi_{N}\) achieve the state of phase desynchronization. 
	
	Under the heading rate constraint, the resulting phase response function can be described in the form of \eqref{eq:PRFGeneral} such that the forward coupling function and the backward coupling function, which will vary with time due to the timing of successive pulses, cause the phase update rule in \eqref{eq:PhaseUpdateGeneral} to be strictly increasing over the entire interval \([0,2\pi)\). Thus, from Theorem \ref{thr:PhaseDesyncTimeVarying}, the phase variables \(\phi_{1},\dots,\phi_{N}\) will desynchronize following the phase response function given in \eqref{eq:PRFGao} under the heading rate constraint given in \eqref{eq:PhaseRateConstraint}. Therefore, the headings of the robots \(\theta_{1},\dots,\theta_{N}\) will desynchronize.
\end{proof}



\begin{remark} \label{r:IncompletePhaseAdjustmentBackward}
	If a robot fires a pulse, yet has not completed its desired phase change \(\psi_i\) from the previous pulse firing, then the robot must reset its frequency \(\omega_i\) back to its fundamental frequency (until the next pulse firing). Otherwise, the resulting phase update rule may not necessarily be increasing over the interval \([0,2\pi)\), and thus, phase desynchronization is not guaranteed.
\end{remark}

\begin{remark} \label{r:IncompletePhaseAdjustmentForward}
	If a robot receives a pulse when its phase variable is in the interval \([\frac{2\pi}{N},2\pi\frac{N-1}{N}]\), and it has not yet completed its desired phase change \(\psi_i\) from the previous pulse firing (when its phase variable was in the interval \((0,\frac{2\pi}{N})\)), then the robot may continue to evolve at its current frequency \(\omega_i\) determined from the previous pulse firing until it has finished achieving its desired phase change, rather than return to its fundamental frequency. The resulting phase update rule will still be strictly increasing, and the results of Theorem \ref{thr:PhaseDesyncControl} will still apply.
\end{remark}

\begin{remark} \label{r:NonIdenticalRotationRates}
	To achieve heading desynchronization, each robot must change its heading at the same rate, \(\omega_{\max}\), so that the phase update rule in \eqref{eq:PhaseUpdateGeneral} is strictly increasing. If non-identical robots with different maximum rates of rotation are used in the same network, then the network must use some rotation rate that is no larger than the smallest maximum rate.
\end{remark}

\begin{remark} \label{r:AlltoAllCondition}
	It is necessary for the robot network to have an all-to-all topology in order to achieve heading desynchronization. Nearly all existing results on pulse-coupled oscillators require this condition to achieve desynchronization among all oscillators. Relaxing this condition necessitates additional limitations to other aspects of the network, such as initial state \cite{AngleaDesync2017}.
\end{remark}

\begin{remark} \label{r:DifferentInitialHeading}
	It is necessary for no two robots to have identical initial headings in order to achieve heading desynchronization, since they will respond identically to any incoming pulses. To the best of our knowledge, all pulse-based desynchronization algorithms fail in this situation. However, this assumption is not critical, since it is rare for two robots to initialize with the same heading if it is based on analog measurements, such as from a magnetometer. Also, under non-ideal conditions such as pulse propagation delay and pulse drops, it is likely for two robots with identical headings to respond differently to incoming pulses, and thus diverge from each other.
\end{remark}

\section{Experimental Results} \label{sec:experiments}

We will now evaluate the theoretical results from Sections \ref{sec:SyncAnalysis} and \ref{sec:DesyncAnalysis} by performing experiments on a physical robotic platform. With these experiments, we can observe the behavior of the robotic network under practical non-ideal conditions, such as pulse propagation delay and pulse drops.

\subsection{Experimental Setup} 

We perform these experiments on a group of iRobot\textsuperscript{\textregistered} Create 2 programmable robots, or Roombas, as shown in Fig. \ref{fig:RoombaGroup} and Fig. \ref{fig:RoombaSingle}. Roombas provide a relatively simple interface with which to control their actions, along with a wide variety of built-in sensors. The movement of a Roomba is controlled using differential steering, allowing for a simple vehicle model on which to experiment.

Each Roomba is outfitted with a Raspberry Pi 3 Model B microcomputer, which is powered by the Roomba's battery. Programs are written in Python on the Raspberry Pi, allowing the Raspberry Pi to send commands to and receive data back from the Roomba and other sensors. Additionally, a XBee RF communication module is connected to each Raspberry Pi to allow communication between Roombas. To determine the heading of each Roomba, a digital magnetometer can be used to give an initial reading. By using wheel encoder sensor measurements from the Roomba, we track and update the Roomba's heading over time.

\begin{figure}[t] 
	\includegraphics[width=0.85\columnwidth]{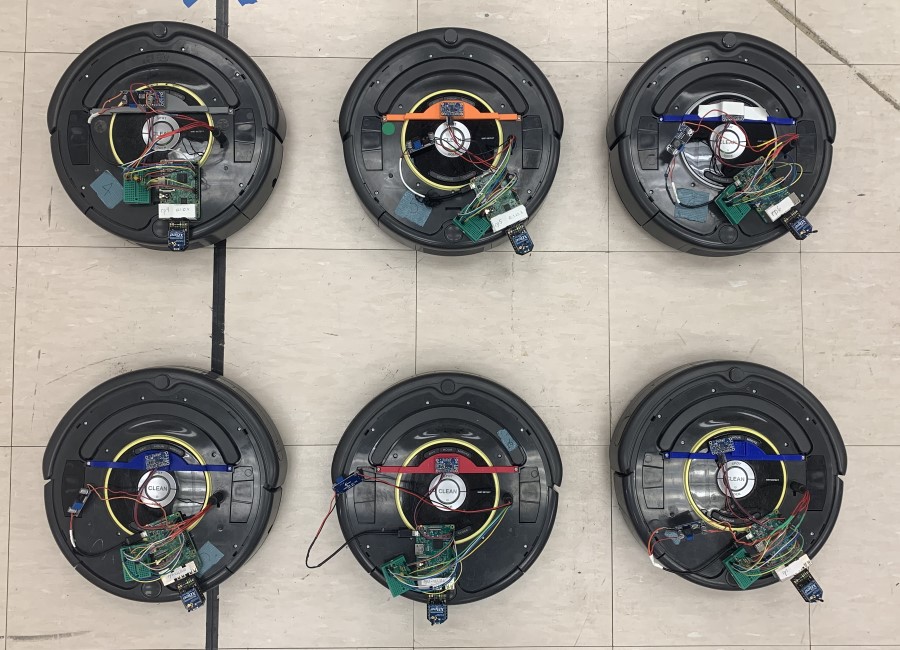}
	\centering
	\caption{The Roomba robotic platform used for experiments. Six Roombas are used to perform the experiments.}
	\label{fig:RoombaGroup}
\end{figure}

\begin{figure}[t] 
	\includegraphics[width=0.65\columnwidth]{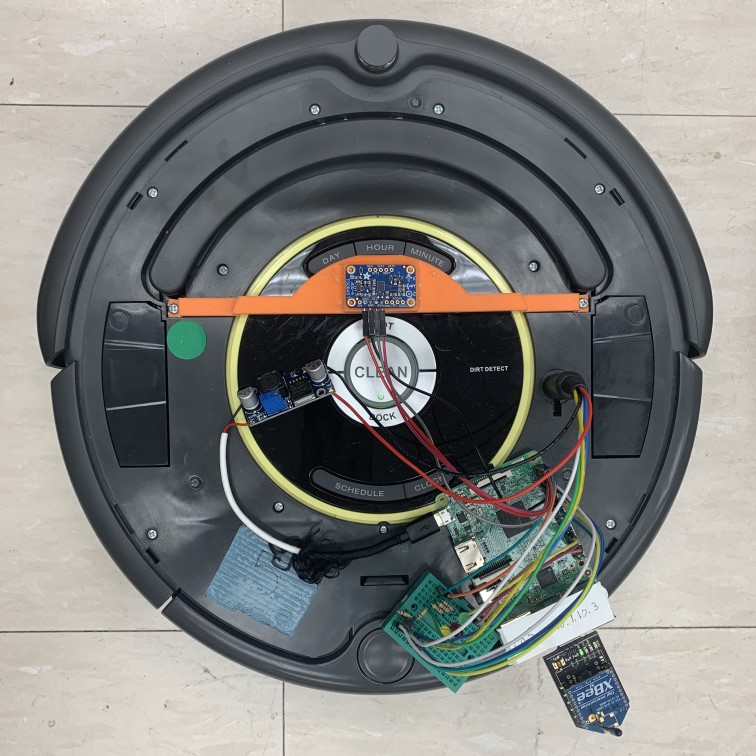}
	\centering
	\caption{A single Roomba used in the experiments. Commands are sent to the Roomba using the Raspberry Pi 3 Model B microcomputer. Pulses are sent between Raspberry Pi 3s using a XBee RF module. Data is retrieved from on-board Roomba sensors to determine heading and orientation.}
	\label{fig:RoombaSingle}
\end{figure}

An oscillator is modeled on each Raspberry Pi, where the phase variable is comprised of the sum of the Roomba's heading and a counter. The counter is based on the internal time of the Raspberry Pi, and increases at the rate \(\omega_{0}\). When a Roomba is not responding to received pulses, the phase variable cycles in the range \([0,2\pi)\) due to the value of the counter while the Roomba is stationary and its heading is unchanging. At the beginning of each experiment, all counters are reset to zero, such that the initial value of the phase variable is equal to the initial heading of the Roomba. To better replicate and compare the results of the experiments, we initialize the heading of each Roomba to a pre-specified direction.

\begin{figure}[t] 
	\includegraphics[width=\columnwidth]{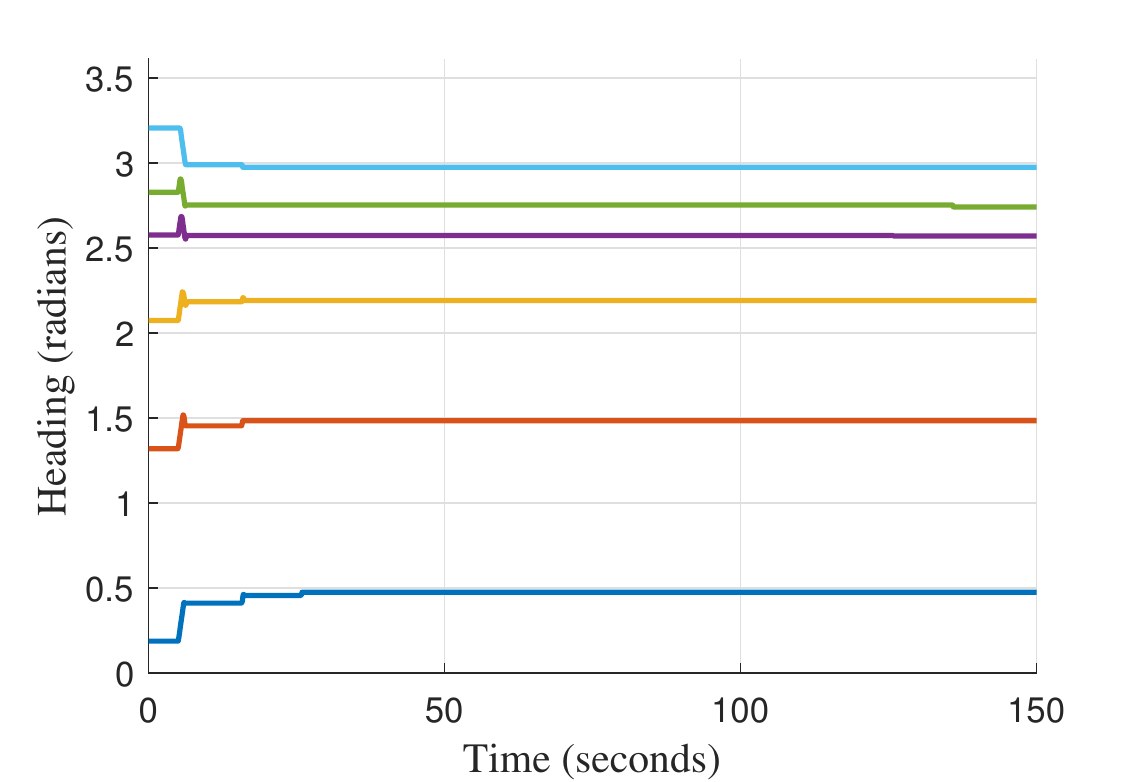}
	\centering
	\caption{Heading evolution of the robots under the heading synchronization algorithm for \(N=6\) robots in an all-to-all topology when the heading rate constraint is ignored, with \(\alpha = 0.5\), \(\omega_0 = \frac{\pi}{5}\), and \(\omega_{\max} = 0.3 \omega_0\). When the heading adjustment is assumed to be instantaneous, the robots are unable to converge their headings.}
	\label{fig:SyncCFM_Heading_Bad}
\end{figure}

Since the RF communication module has a very large communication range (approximately \(100\) meters) compared to the spacing between the Roombas (approximately \(10\) meters), every Roomba can respond to every other Roomba. To test different network topologies, we can define the connections between Roombas by letting each Roomba respond to pulses from a specific set of Roombas. For a single Roomba, when the phase variable reaches the threshold \(2\pi\), a pulse is sent to all connected Roombas through the RF communication module, and the counter value for that Roomba is decreased by \(2\pi\) such that the phase variable is reset to zero. When a Roomba receives a pulse through the RF communication module, the current phase (i.e., the sum of the current heading and counter) is used to determine the amount of phase change required by the PCO algorithm, as given in \eqref{eq:PhaseAdjustment}. This phase change is then applied to the heading of the Roomba. The Roomba spins at a constant rate \(\omega_{\max}\) toward the heading at which it should be until the desired heading change is achieved or another pulse is received. Thus, the control of the heading, as given in \eqref{eq:HeadingControl} with the necessary heading constraint, is satisfied for each Roomba.

The constraint on the rotation rate cannot be ignored when applying the PCO algorithm to physical systems. Because of the heading rate constraint, the robot may not have completed its desired heading change before a new pulse is received. If it is assumed that the robot has achieved the desired heading change when a new pulse is received, even if it has not, then the robot network will not achieve the desired state. We demonstrate this behavior in Fig. \ref{fig:SyncCFM_Heading_Bad}, showing the evolution of the Roomba headings assuming instantaneous heading adjustment according to the PRC in Fig. \ref{fig:PRCSync1}. The headings, rather than synchronizing to the same value, stop adjusting and do not converge. This result confirms the necessity to address heading coordination under rate constraints explicitly.

\subsection{Heading Synchronization} 

We now show the experimental results for a group of six Roombas following the synchronization algorithm detailed in Section \ref{sec:SyncAnalysis}.

\begin{figure}[t] 
	\includegraphics[width=\columnwidth]{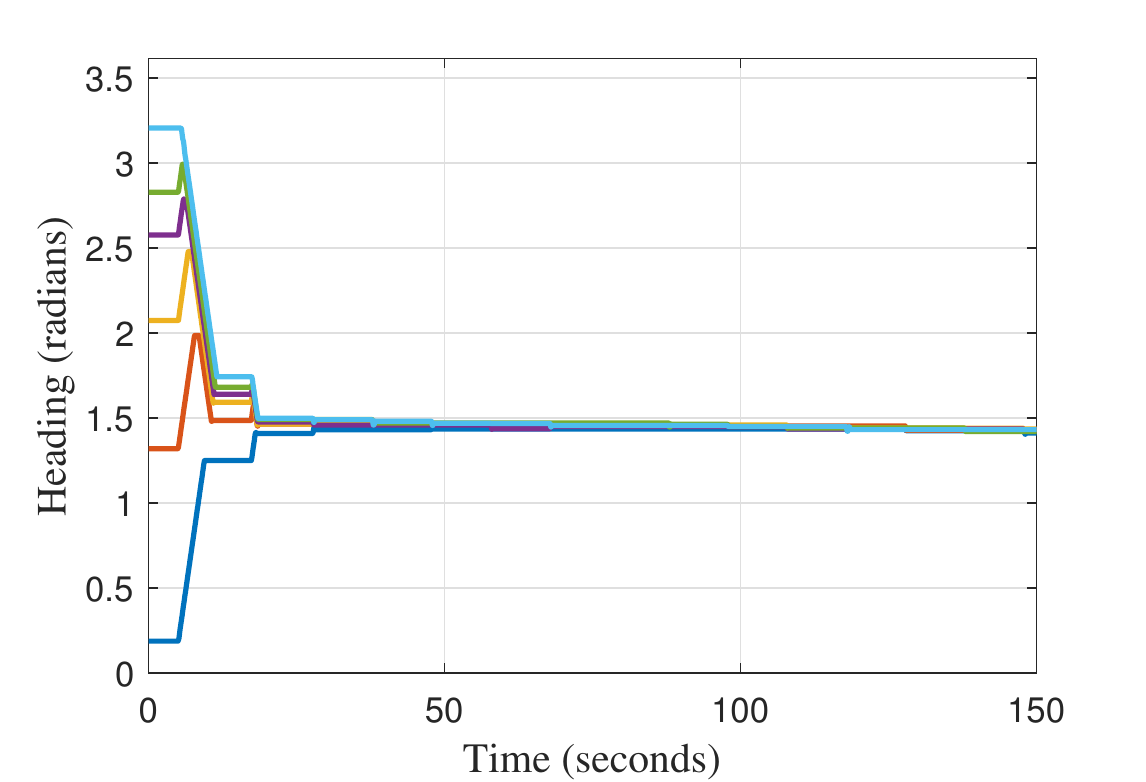}
	\centering
	\caption{Heading evolution of the robots under the heading synchronization algorithm for \(N=6\) robots in an all-to-all topology, with \(\alpha = 0.5\), \(D = 0\), \(\omega_0 = \frac{\pi}{5}\), and \(\omega_{\max} = 0.3 \omega_0\). }
	\label{fig:SyncCFM_Heading}
\end{figure}

\begin{figure}[t] 
	\includegraphics[width=\columnwidth]{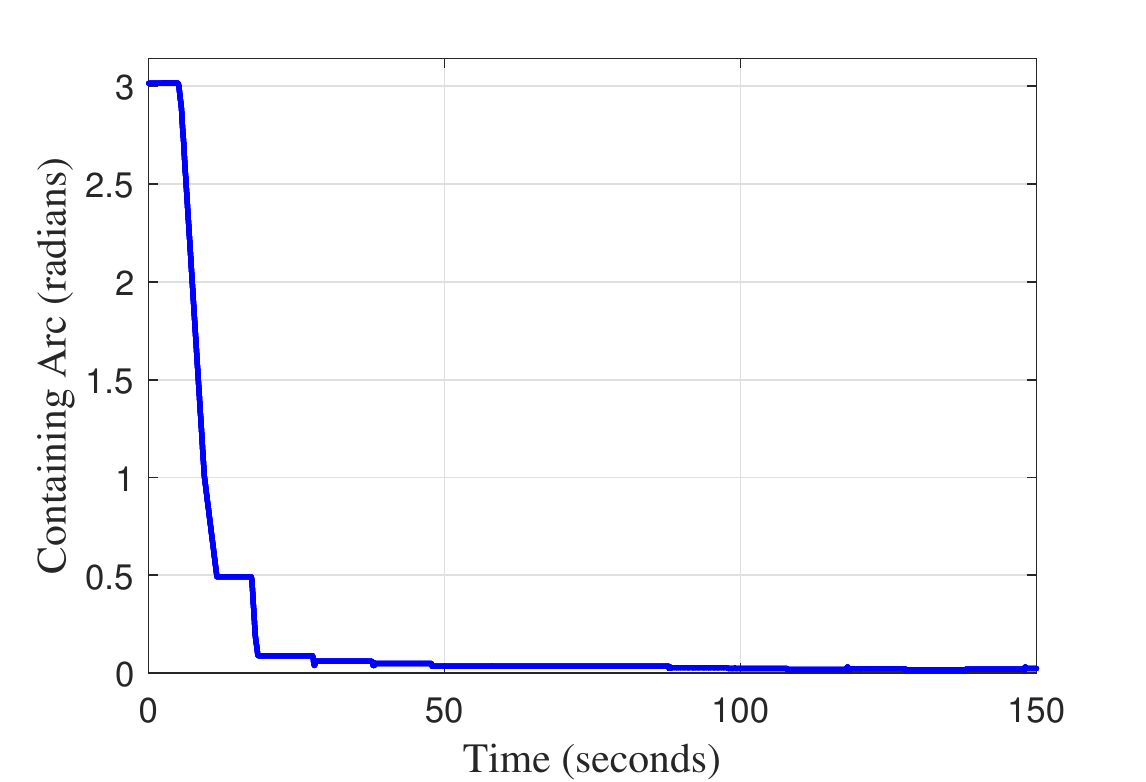}
	\centering
	\caption{Containing arc, \(\Lambda\), as a function of time for the robot headings in Fig. \ref{fig:SyncCFM_Heading}.}
	\label{fig:SyncCFM_ContainingArc}
\end{figure}

\begin{figure}[t] 
	\includegraphics[width=\columnwidth]{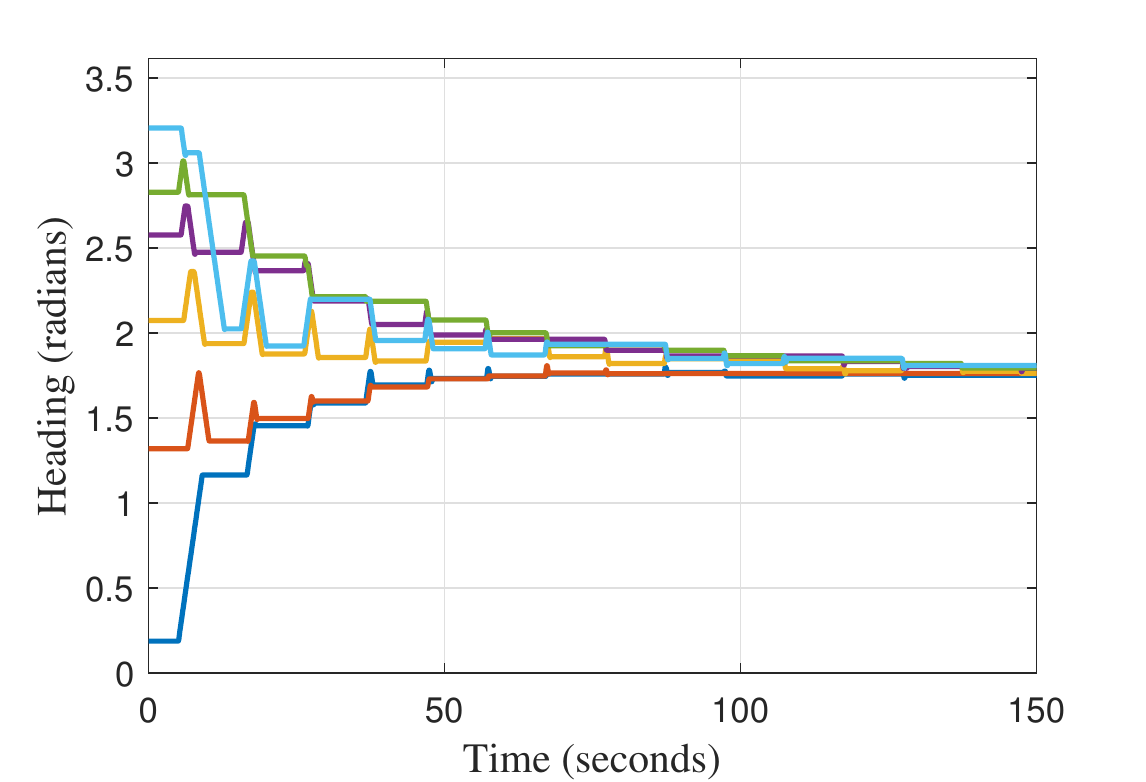}
	\centering
	\caption{Heading evolution of the robots under the heading synchronization algorithm for \(N=6\) robots in a bidirectional ring topology, with \(\alpha = 0.5\), \(D = 0\), \(\omega_0 = \frac{\pi}{5}\), and \(\omega_{\max} = 0.3 \omega_0\).}
	\label{fig:SyncCFM_HeadingRing}
\end{figure}

\begin{figure}[t] 
	\includegraphics[width=\columnwidth]{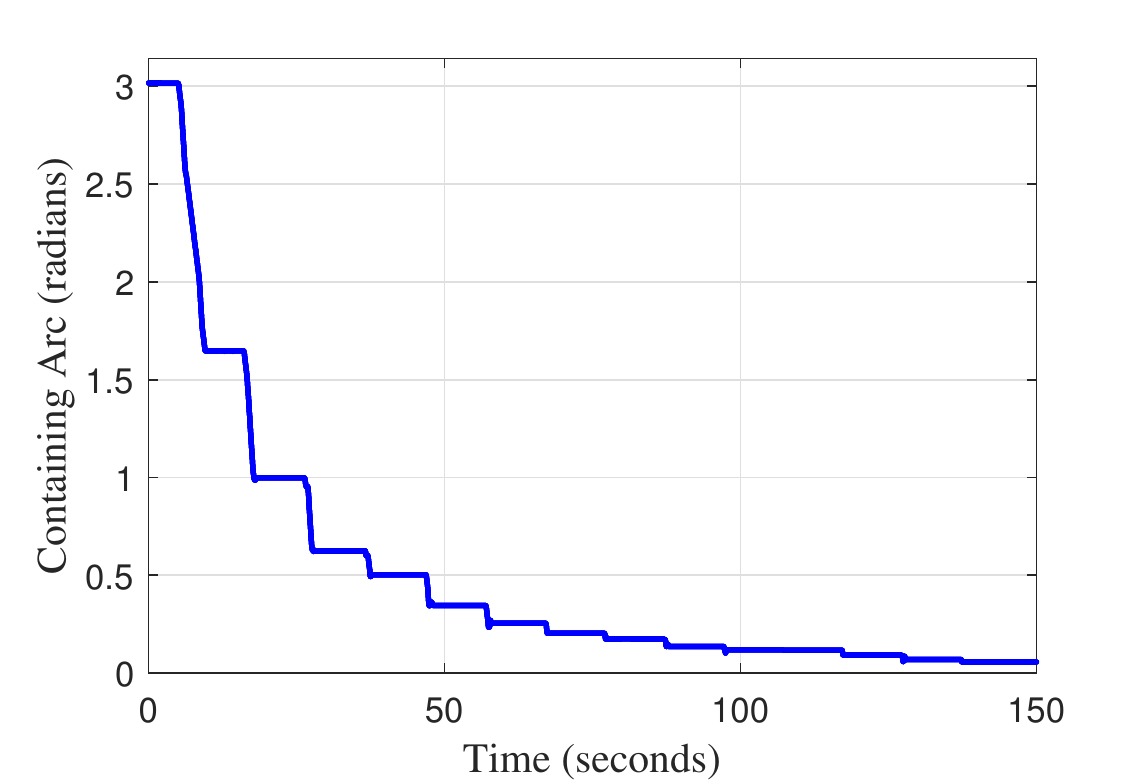}
	\centering
	\caption{Containing arc, \(\Lambda\), as a function of time for the robot headings in Fig. \ref{fig:SyncCFM_HeadingRing}.}
	\label{fig:SyncCFM_ContainingArcRing}
\end{figure}

\begin{figure}[t] 
	\includegraphics[width=\columnwidth]{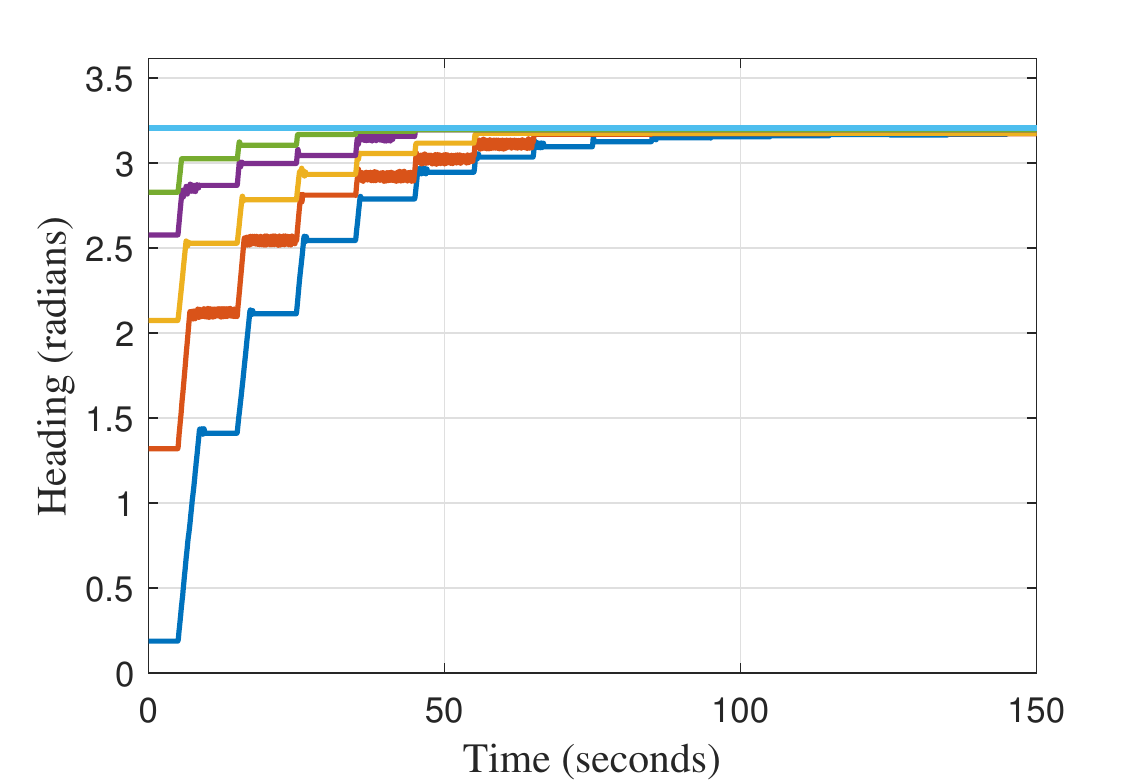}
	\centering
	\caption{Heading evolution of the robots under the heading synchronization algorithm for \(N=6\) robots in an all-to-all topology, with \(\alpha = 0.5\), \(D = \pi\), \(\omega_0 = \frac{\pi}{5}\), and \(\omega_{\max} = 0.3 \omega_0\). }
	\label{fig:SyncCFM_Heading_Refractory}
\end{figure}

\begin{figure}[t] 
	\includegraphics[width=\columnwidth]{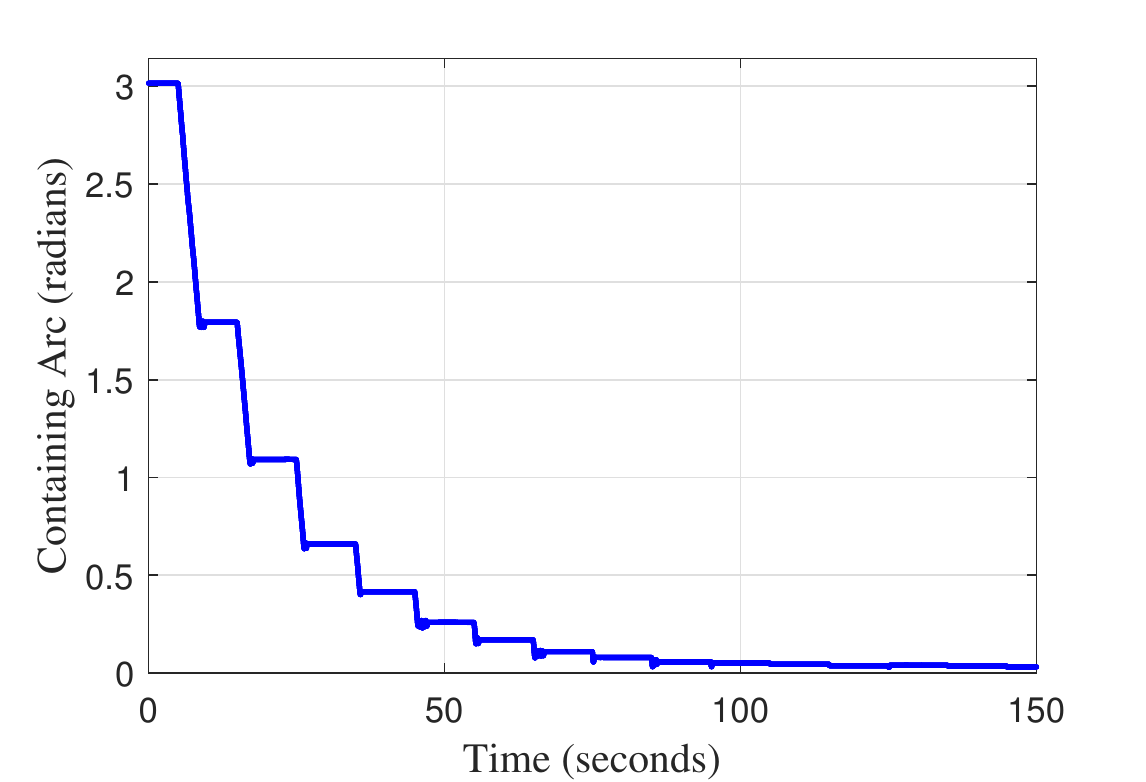}
	\centering
	\caption{Containing arc, \(\Lambda\), as a function of time for the robot headings in Fig. \ref{fig:SyncCFM_Heading_Refractory}.}
	\label{fig:SyncCFM_ContainingArc_Refractory}
\end{figure}

The Roombas, under an all-to-all topology, begin with initial headings within a containing arc \(\Lambda < \pi\). The headings of the Roombas over time are shown in Fig. \ref{fig:SyncCFM_Heading}, in which we see that the headings of the Roombas converge to the same value. As can be seen from the resulting containing arc over time, as shown in Fig. \ref{fig:SyncCFM_ContainingArc}, the network reaches the synchronized state and is stable.

Synchronization of robot headings can also be achieved under more general topologies. We illustrate this point by using the bidirectional ring topology. The headings of the Roombas over time are shown in Fig. \ref{fig:SyncCFM_HeadingRing}, in which we see that the headings of the Roombas converge to the same value. It is apparent from the resulting containing arc shown in Fig. \ref{fig:SyncCFM_ContainingArcRing} that the network synchronizes more slowly with the ring topology compared to the all-to-all topology.

Due to the propagation delay of pulses between Roombas, the headings of the Roomba slowly shift backward over time. This behavior can be reduced by introducing a non-zero refractory period, \(D\), such that robots do not respond to pulses received when their individual phase variable is in the region \([0,D)\) \cite{EnergyEfficientSync2012}. Using the same initial headings for the Roombas, Fig. \ref{fig:SyncCFM_Heading_Refractory} shows the evolution of the network with refractory period \(D = \pi\). As can be seen from the resulting containing arc evolution in Fig. \ref{fig:SyncCFM_ContainingArc_Refractory}, the network reaches the synchronized state and is stable.

\subsection{Heading Desynchronization} 

We next show the experimental results for a group of six Roombas in an all-to-all topology following the desynchronization algorithm detailed in Section \ref{sec:DesyncAnalysis}.

We initialize the Roombas to have headings close together. The evolution of the headings is shown in Fig. \ref{fig:DesyncCFM_Heading}, in which we see that the headings of the Roombas diverge, or spread out equally. By observing the desynchronization measure, given in \eqref{eq:DesyncIndex}, as shown in Fig. \ref{fig:DesyncCFM_ContainingArc}, the network reaches the desynchronized state, and is stable.

Similar to the synchronization case, the Roomba headings slowly drift backward over time. This behavior is again due to the propagation delay of the pulses, and can be mitigated by reducing the backward coupling strength, \(l_{2}\). This mitigation is confirmed by Fig. \ref{fig:DesyncCFM_Heading_L2=0}, which shows the evolution of the robot headings with backward coupling strength \(l_{2} = 0\). The corresponding evolution of the desynchronization measure is shown in Fig. \ref{fig:DesyncCFM_ContainingArc_L2=0}.

\begin{figure}[t] 
	\includegraphics[width=\columnwidth]{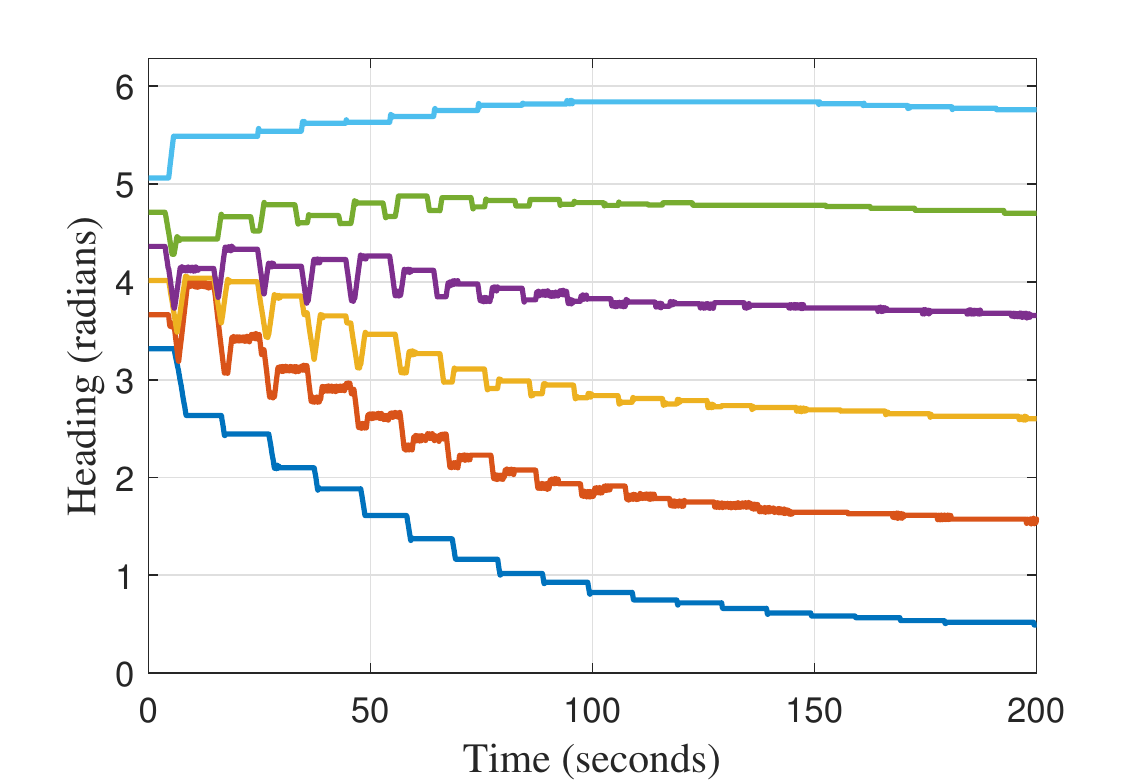}
	\centering
	\caption{Heading evolution of the robots under the heading desynchronization algorithm for \(N=6\) robots in an all-to-all topology, with \(l_{1} = 0.8\), \(l_{2} = 0.6\), \(\omega_0 = \frac{\pi}{5}\), and \(\omega_{\max} = 0.3 \omega_0\).}
	\label{fig:DesyncCFM_Heading}
\end{figure}

\begin{figure}[t] 
	\includegraphics[width=\columnwidth]{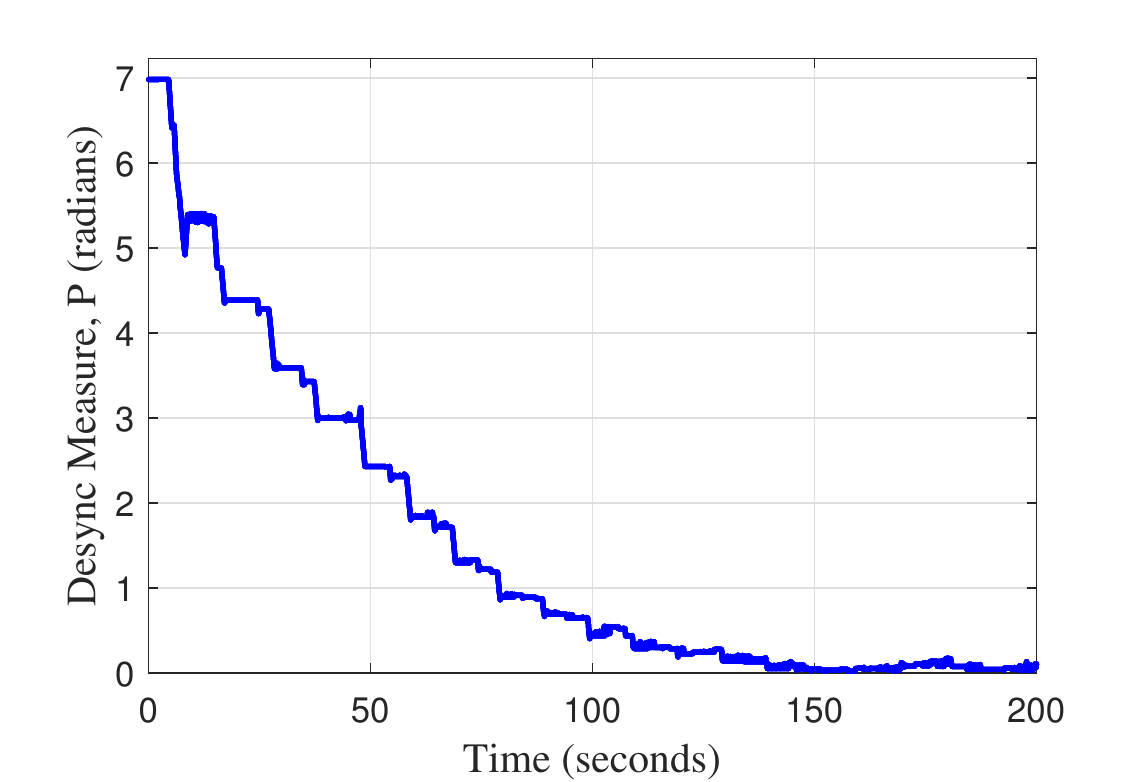}
	\centering
	\caption{Desynchronization measure, \(P\), as a function of time for the robot headings in Fig. \ref{fig:DesyncCFM_Heading}.}
	\label{fig:DesyncCFM_ContainingArc}
\end{figure}

\begin{figure}[t] 
	\includegraphics[width=\columnwidth]{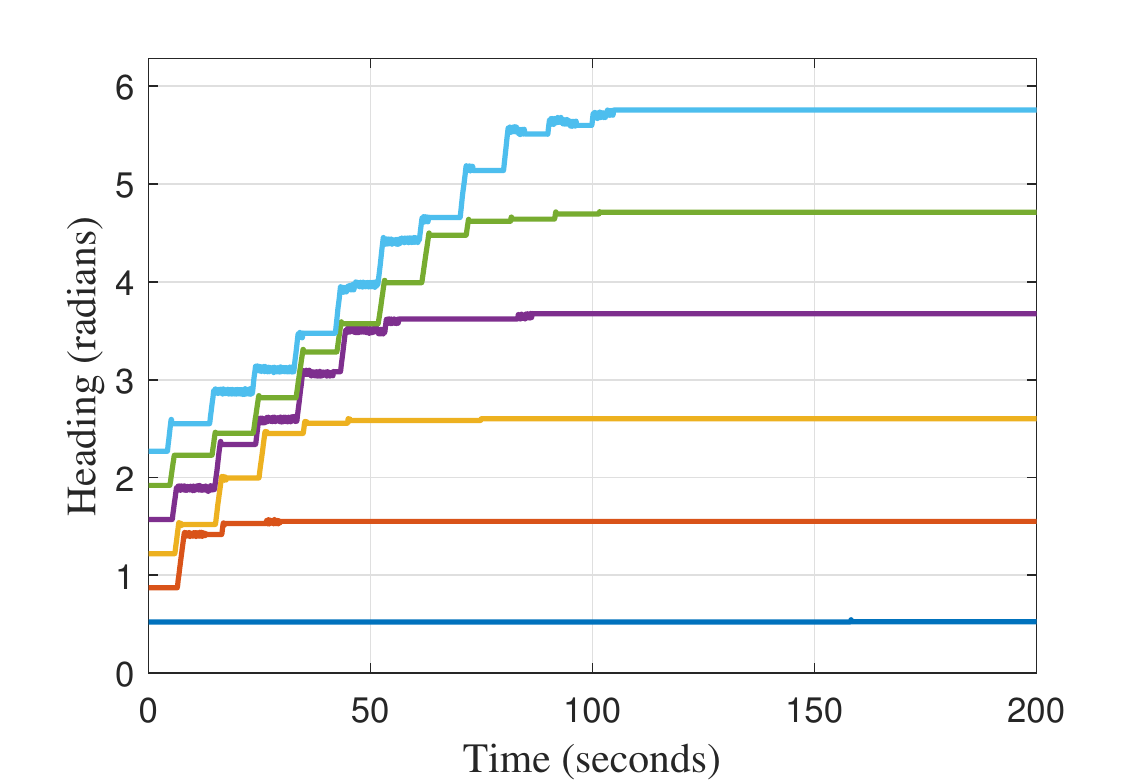}
	\centering
	\caption{Heading evolution of the robots under the heading desynchronization algorithm for \(N=6\) robots in an all-to-all topology, with \(l_{1} = 0.8\), \(l_{2} = 0\), \(\omega_0 = \frac{\pi}{5}\), and \(\omega_{\max} = 0.3 \omega_0\).}
	\label{fig:DesyncCFM_Heading_L2=0}
\end{figure}

\begin{figure}[t] 
	\includegraphics[width=\columnwidth]{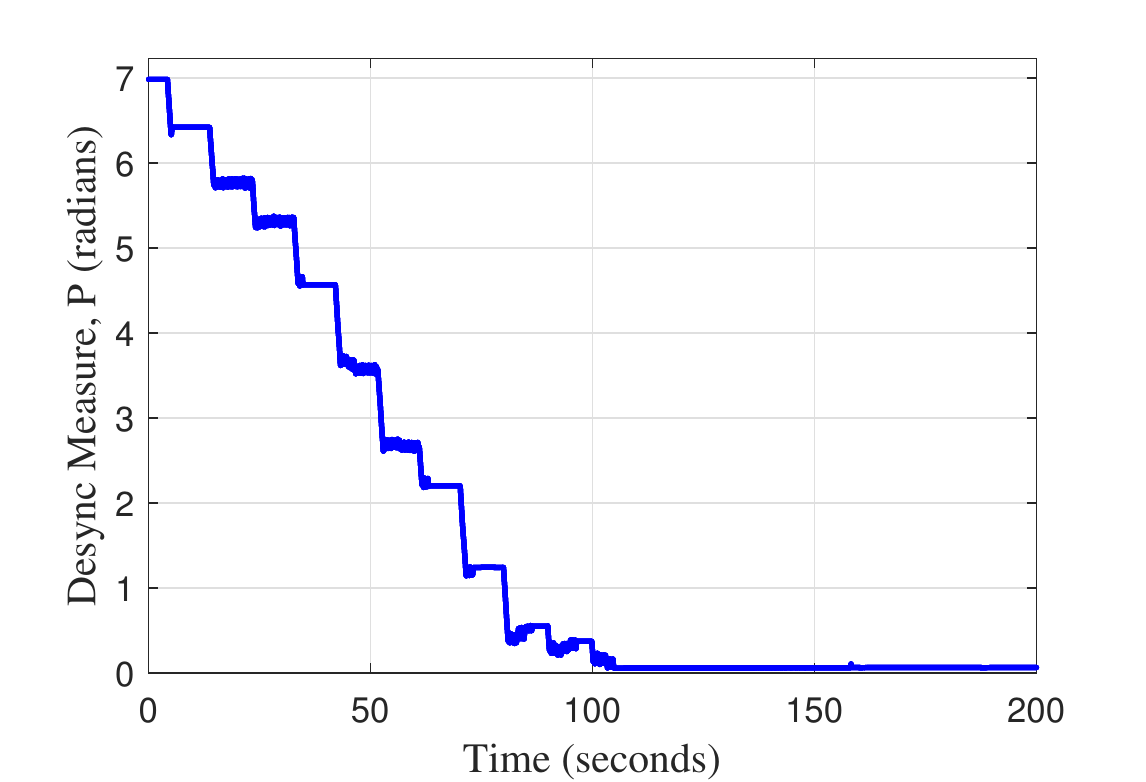}
	\centering
	\caption{Desynchronization measure, \(P\), as a function of time for the robot headings in Fig. \ref{fig:DesyncCFM_Heading_L2=0}.}
	\label{fig:DesyncCFM_ContainingArc_L2=0}
\end{figure}

\section{Conclusions} \label{sec:conclusion} 

In this paper, we consider the problem of decentralized heading control in mobile robots while considering the constraint of a finite rate of robot heading adjustment. We propose a decentralized hybrid-dynamical framework inspired by the study of pulse-coupled oscillators. Our approach avoids the discretization of a continuous control strategy while accounting for rate constraints by integrating the discrete-time communication and continuous-time control into the hybrid framework.

We rigorously analyze the hybrid dynamics of this pulse-coupled oscillator inspired heading control algorithm. We prove that this control strategy can achieve heading synchronization or desynchronization in a decentralized robotic network while conforming to the heading rate constraint. We demonstrate the control approach with physical experiments on a robotic platform.

We analyze the behavior of the robotic network using two pulse-coupled oscillator algorithms under the necessary heading rate constraint. Both of these algorithms have been used to study the behavior of oscillators with non-identical frequencies \cite{WangOptimalPRCSync2012, GaoPRCDesync2017}. Future research is required to study the behavior of oscillators having non-identical frequencies and non-identical robots with heterogeneous heading constraints. Additional synchronization and desynchronization algorithms for pulse-coupled oscillators have also been proposed \cite{Werner-Allen2005, ScalableSync2005, Nagpal2007, Scaglione2010, AngleaDesync2017}. Further research is needed to determine if these other algorithms can achieve the desired robotic heading under the heading rate constraint and proposed control strategy and if our results can be extended to achieve general motion coordination.

\section{Acknowledgment} \label{sec:acknowledge}
We would like to thank those who reviewed and provided feedback on the initial drafts of this paper.

Special thanks goes to Martyn Lemon, who contributed significantly to the experimental results in this paper.

\appendices

\section{Proof of Lemma \ref{lem:PhaseDesyncGeneral}} \label{ap:PhaseDesyncGeneralProof}
\begin{proof}
The proof of Lemma \ref{lem:PhaseDesyncGeneral} follows the approach given in \cite{GaoPRCDesync2017}, which gives a proof for a specific case, when the forward and backward coupling strengths are constant over their respective domains. Thus, we need to show that the desynchronization measure, \(P\), in \eqref{eq:DesyncIndex} decreases during each cycle until phase desynchronization is achieved.

To analyze the change of \(P\) at each firing instance, we will utilize the definitions of ``active pulse'' and ``silent pulse'' used in \cite{GaoPRCDesync2017}.

\begin{definition} \label{def:ActivePulse}
	A pulse is an ``active pulse'' when at least one robot has phase variable \(\phi_i \in (0,\frac{2\pi}{N}) \cup (2\pi\frac{N-1}{N},2\pi)\) when the pulse is emitted.
\end{definition}
\begin{definition} \label{def:SilentPulse}
	A pulse is a ``silent pulse'' when no robots have phase variable \(\phi_i \in (0,\frac{2\pi}{N}) \cup (2\pi\frac{N-1}{N},2\pi)\) when the pulse is emitted.
\end{definition}

According to Definitions \ref{def:ActivePulse} and \ref{def:SilentPulse}, a pulse is either an ``active pulse'' or a ``silent pulse''. For a ``silent pulse'', no robot phase variables are adjusted, according to \eqref{eq:PRFGeneral}. Thus, a ``silent pulse'' has no effect on phase differences in \eqref{eq:PhaseDifference} and the measure \(P\). Alternatively, an ``active pulse'' will cause robot phase variables to adjust, and thus may change the value of the measure \(P\). We can verify the existence of an ``active pulse'' in each cycle of pulse firings before phase desynchronization is achieved using the results found in \cite{GaoPRCDesync2017}. Thus, we need to show that the desynchronization measure \(P\) will decrease at each ``active pulse'' and converge to zero.

Without loss of generality, we assume that robot \(k\) emits an ``active pulse'' at time \(t = t_{k}\). From Definition \ref{def:ActivePulse}, there must be at least one robot with a phase within \((0,\frac{2\pi}{N}) \cup (2\pi\frac{N-1}{N},2\pi)\) at time \(t = t_{k}\). Without loss of generality, we assume that there are \(M\) robots with phase within \((0, \frac{2\pi}{N})\) and \(S\) robots with phase within \((2\pi\frac{N-1}{N},2\pi)\), where \(M\) and \(S\) are positive integers satisfying \(2 \leq M + S \leq N - 1\). The \(M\) and \(S\) phase variables are represented as \(\phi_{\widehat{k-1}}, \dots, \phi_{\widehat{k-M}}\) and \(\phi_{\widehat{k+1}}, \dots, \phi_{\widehat{k+S}}\) respectively, where the superscript ``\(\widehat{\ \ \ }\)'' represent modulo operation on \(N\), i.e., \(\widehat{\cdot} \triangleq (\cdot)\bmod N\). (Note: \(0\) maps to \(N\).) As a result of our assumptions, we have \(\phi_{\widehat{k-M}} < \frac{2\pi}{N} \leq \phi_{\widehat{k-M-1}}\) and \(\phi_{\widehat{k+S+1}} < 2\pi\frac{N-1}{N} \leq \phi_{\widehat{k+S}}\). 

Since \(\phi_{\widehat{k-1}}, \dots, \phi_{\widehat{k-M}}\) and \(\phi_{\widehat{k+1}}, \dots, \phi_{\widehat{k+S}}\) reside in \((0,\frac{2\pi}{N}) \cup (2\pi\frac{N-1}{N},2\pi)\), those robots will update their phase variables after receiving a pulse from robot \(k\) according to the phase update rule in \eqref{eq:PhaseUpdateGeneral}, such that
\begin{subequations} \label{eq:MSPhaseUpdate}
	\begin{equation} \label{eq:MSPhaseUpdate_1}
	\phi_{\widehat{k-i}}^{+} = (1 - L_{1}(\phi_{\widehat{k-i}}))\phi_{\widehat{k-i}} + L_{1}(\phi_{\widehat{k-i}})\frac{2\pi}{N}
	\end{equation}
	for \(i = 1, \dots, M\), and
	\begin{equation} \label{eq:MSPhaseUpdate_2}
	\phi_{\widehat{k+j}}^{+} = (1 - L_{2}(\phi_{\widehat{k+j}}))\phi_{\widehat{k+j}} + L_{2}(\phi_{\widehat{k+j}})2\pi\frac{N-1}{N}
	\end{equation}
	for \(j = 1, \dots, S\). Note that we also have \(\phi_{\widehat{k+q}}^{+} = \phi_{\widehat{k+q}}\) for \(q = S+1, \dots, N-M-1\) from \eqref{eq:PhaseUpdateGeneral}, and \(\phi_{k}^{+} = 0\).
\end{subequations}

According to \eqref{eq:PhaseDifference}	and \eqref{eq:MSPhaseUpdate}, the resulting phase variable differences after the phase update, \(\Delta_{i}^{+}\), 
caused by an ``active pulse'' from robot \(k\) can be given in seven parts by
\begin{subequations} \label{eq:PhaseDifferenceUpdate}
	\begin{multline} \label{eq:PhaseDifferenceUpdate_1}
	\Delta_{\widehat{k-M-1}}^{+} = \phi_{\widehat{k-M-1}}^{+} - \phi_{\widehat{k-M}}^{+} \\ = \phi_{\widehat{k-M-1}} - (1 - L_{1}(\phi_{\widehat{k-M}}))\phi_{\widehat{k-M}} -  L_{1}(\phi_{\widehat{k-M}})\frac{2\pi}{N}
	\end{multline}
	\begin{multline} \label{eq:PhaseDifferenceUpdate_2}
	\Delta_{\widehat{k-i}}^{+} = \phi_{\widehat{k-i}}^{+} - \phi_{\widehat{k-i+1}}^{+} \\
	= \big[(1 - L_{1}(\phi_{\widehat{k-i}}))\phi_{\widehat{k-i}} - (1 - L_{1}(\phi_{\widehat{k-i+1}}))\phi_{\widehat{k-i+1}}\big] \\
	+ \big(L_{1}(\phi_{\widehat{k-i}}) - L_{1}(\phi_{\widehat{k-i+1}})\big)\frac{2\pi}{N}
	\end{multline}
	for \(i = 2, \dots, M\),
	\begin{multline} \label{eq:PhaseDifferenceUpdate_3}
	\Delta_{\widehat{k-1}}^{+} = \phi_{\widehat{k-1}}^{+} - \phi_{\widehat{k}}^{+} \\
	= (1 - L_{1}(\phi_{\widehat{k-1}}))\phi_{\widehat{k-1}} +  L_{1}(\phi_{\widehat{k-1}})\frac{2\pi}{N}
	\end{multline}
	\begin{multline} \label{eq:PhaseDifferenceUpdate_4}
	\Delta_{\widehat{k}}^{+} = \phi_{\widehat{k}}^{+} - \phi_{\widehat{k+1}}^{+} + 2\pi \\
	= 2\pi - (1 - L_{2}(\phi_{\widehat{k+1}}))\phi_{\widehat{k+1}} -  L_{2}(\phi_{\widehat{k+1}})2\pi\frac{N-1}{N}
	\end{multline}
	\begin{multline} \label{eq:PhaseDifferenceUpdate_5}
	\Delta_{\widehat{k+j}}^{+} = \phi_{\widehat{k+j}}^{+} - \phi_{\widehat{k+j+1}}^{+}\\
	= \big[(1 - L_{2}(\phi_{\widehat{k+j}}))\phi_{\widehat{k+j}} - (1 - L_{2}(\phi_{\widehat{k+j+1}}))\phi_{\widehat{k+j+1}}\big] \\
	+ \big(L_{2}(\phi_{\widehat{k+j}}) - L_{2}(\phi_{\widehat{k+j+1}})\big)2\pi\frac{N-1}{N}
	\end{multline}
	for \(j = 1, \dots, S-1\),
	\begin{multline} \label{eq:PhaseDifferenceUpdate_6}
	\Delta_{\widehat{k+S}}^{+} = \phi_{\widehat{k+S}}^{+} - \phi_{\widehat{k+S+1}}^{+} \\
	= (1 - L_{2}(\phi_{\widehat{k+S}}))\phi_{\widehat{k+S}} + L_{2}(\phi_{\widehat{k+S}})2\pi\frac{N-1}{N} - \phi_{\widehat{k+S+1}}
	\end{multline}
	and
	\begin{equation} \label{eq:PhaseDifferenceUpdate_7}
	\Delta_{\widehat{k+q}}^{+} = \phi_{\widehat{k+q}}^{+} - \phi_{\widehat{k+q+1}}^{+} = \phi_{\widehat{k+q}} - \phi_{\widehat{k+q+1}} = \Delta_{\widehat{k+q}}
	\end{equation}
	for \(q = S+1, \dots, N-M-2\).
\end{subequations}
Note that \eqref{eq:PhaseDifferenceUpdate} can be simplified to that found in \cite{GaoPRCDesync2017} if the forward and backward coupling functions are constant over their respective domains.

The new value for \(P\) (denoted as \(P^{+}\)) after the update is given by:
\begin{equation} \label{eq:DesyncIndexUpdate}
P^{+} = \sum_{k=1}^{N} |\Delta_{k}^{+} - \frac{2\pi}{N}|
\end{equation}

To determine the change in the value of measure \(P\) caused by the ``active pulse'' from robot \(k\), we can calculate the difference between the value of \(P\) before and after the phase update:
\begin{equation} \label{eq:DesyncIndexDifference}
P^{+} - P = \sum_{k=1}^{N} \Big[|\Delta_{k}^{+} - \frac{2\pi}{N}| - |\Delta_{k} - \frac{2\pi}{N}|\Big]
\end{equation}

We can divide \eqref{eq:DesyncIndexDifference} into seven parts, using the expressions found in \eqref{eq:PhaseDifference} and \eqref{eq:PhaseDifferenceUpdate}:
\begin{subequations} \label{eq:DesyncIndexDifferenceExpand}
	\begin{equation*}
	P^{+} - P = \sum_{k=1}^{N} \Big[|\Delta_{k}^{+} - \frac{2\pi}{N}| - |\Delta_{k} - \frac{2\pi}{N}|\Big]
	\end{equation*}
	\begin{equation} \label{eq:DesyncIndexDifferenceExpand_p1}
	= |\Delta_{\widehat{k-M-1}}^{+} - \frac{2\pi}{N}| - |\Delta_{\widehat{k-M-1}} - \frac{2\pi}{N}|
	\end{equation}
	\begin{equation} \label{eq:DesyncIndexDifferenceExpand_p2}
	+ \sum_{i=2}^{M} \Big[|\Delta_{\widehat{k-i}}^{+} - \frac{2\pi}{N}| - |\Delta_{\widehat{k-i}} - \frac{2\pi}{N}|\Big]
	\end{equation}
	\begin{equation} \label{eq:DesyncIndexDifferenceExpand_p3}
	+ |\Delta_{\widehat{k-1}}^{+} - \frac{2\pi}{N}| - |\Delta_{\widehat{k-1}} - \frac{2\pi}{N}|
	\end{equation}
	\begin{equation} \label{eq:DesyncIndexDifferenceExpand_p4}
	+ |\Delta_{k}^{+} - \frac{2\pi}{N}| - |\Delta_{k} - \frac{2\pi}{N}|
	\end{equation}
	\begin{equation} \label{eq:DesyncIndexDifferenceExpand_p5}
	+ |\sum_{j=1}^{S-1} \Big[|\Delta_{\widehat{k+j}}^{+} - \frac{2\pi}{N}| - |\Delta_{\widehat{k+j}} - \frac{2\pi}{N}|\Big]
	\end{equation}
	\begin{equation} \label{eq:DesyncIndexDifferenceExpand_p6}
	+ |\Delta_{\widehat{k+S}}^{+} - \frac{2\pi}{N}| - |\Delta_{\widehat{k+S}} - \frac{2\pi}{N}|
	\end{equation}
	\begin{equation} \label{eq:DesyncIndexDifferenceExpand_p7}
	+ \sum_{q=S+1}^{N-M-2} \Big[|\Delta_{\widehat{k+q}}^{+} - \frac{2\pi}{N}| - |\Delta_{\widehat{k+q}} - \frac{2\pi}{N}|\Big]
	\end{equation}
\end{subequations}

Eq. \eqref{eq:DesyncIndexDifferenceExpand_p2} can be simplified as follows:
\begin{multline} \label{eq:DesyncIndexDifferenceExpand_p2simple}
\sum_{i=2}^{M} \Big[|\big[(1 - L_{1}(\phi_{\widehat{k-i}}))\phi_{\widehat{k-i}} - (1 - L_{1}(\phi_{\widehat{k-i+1}}))\phi_{\widehat{k-i+1}}\big] \\
+ \big(L_{1}(\phi_{\widehat{k-i}}) - L_{1}(\phi_{\widehat{k-i+1}})\big)\frac{2\pi}{N} - \frac{2\pi}{N}| \\
- |(\phi_{\widehat{k-i}} - \phi_{\widehat{k-i+1}}) - \frac{2\pi}{N}|\Big] \\
= \sum_{i=2}^{M} \Big[\frac{2\pi}{N} \big(L_{1}(\phi_{\widehat{k-i+1}}) - L_{1}(\phi_{\widehat{k-i}})\big) \\
+ L_{1}(\phi_{\widehat{k-i}})\phi_{\widehat{k-i}} - L_{1}(\phi_{\widehat{k-i+1}})\phi_{\widehat{k-i+1}} \Big] \\
= \frac{2\pi}{N} \big(L_{1}(\phi_{\widehat{k-1}}) - L_{1}(\phi_{\widehat{k-M}})\big) \\
+ L_{1}(\phi_{\widehat{k-M}})\phi_{\widehat{k-M}} - L_{1}(\phi_{\widehat{k-1}})\phi_{\widehat{k-1}}
\end{multline}
where we used the relationships \(\phi_{\widehat{k-i}}-\phi_{\widehat{k-i+1}} < \frac{2\pi}{N}\) and \(\big[(1 - L_{1}(\phi_{\widehat{k-i}}))\phi_{\widehat{k-i}} - (1 - L_{1}(\phi_{\widehat{k-i+1}}))\phi_{\widehat{k-i+1}}\big] +  \frac{2\pi}{N}(L_{1}(\phi_{\widehat{k-i}}) - L_{1}(\phi_{\widehat{k-i+1}}))< \frac{2\pi}{N}\) for \(i = 2, \dots, M\).

Eq. \eqref{eq:DesyncIndexDifferenceExpand_p3} can be simplified as follows:
\begin{multline} \label{eq:DesyncIndexDifferenceExpand_p3simple}
|(1 - L_{1}(\phi_{\widehat{k-1}}))\phi_{\widehat{k-1}} + L_{1}(\phi_{\widehat{k-1}})\frac{2\pi}{N} - \frac{2\pi}{N}| - |\phi_{\widehat{k-1}} -  \frac{2\pi}{N}| \\
= L_{1}(\phi_{\widehat{k-1}})(\phi_{\widehat{k-1}}-\frac{2\pi}{N})
\end{multline}
where we used the fact that \(\phi_{\widehat{k-1}} < \frac{2\pi}{N}\).

Eq. \eqref{eq:DesyncIndexDifferenceExpand_p4} can be simplified as follows:
\begin{multline} \label{eq:DesyncIndexDifferenceExpand_p4simple}
|2\pi - (1 - L_{2}(\phi_{\widehat{k+1}}))\phi_{\widehat{k+1}} - L_{2}(\phi_{\widehat{k+1}})2\pi\frac{N-1}{N} - \frac{2\pi}{N}| \\
- |(2\pi - \phi_{\widehat{k+1}}) -  \frac{2\pi}{N}| \\
= L_{2}(\phi_{\widehat{k+1}})(2\pi\frac{N-1}{N} - \phi_{\widehat{k+1}})
\end{multline}
where we used the fact that \(2\pi - \phi_{\widehat{k+1}} < \frac{2\pi}{N}\).

Eq. \eqref{eq:DesyncIndexDifferenceExpand_p5} can be simplified as follows:
\begin{multline} \label{eq:DesyncIndexDifferenceExpand_p5simple}
\sum_{j=1}^{S-1} \Big[|\big[(1 - L_{2}(\phi_{\widehat{k+j}}))\phi_{\widehat{k+j}} - (1 - L_{2}(\phi_{\widehat{k+j+1}}))\phi_{\widehat{k+j+1}}\big] \\
+ \big(L_{2}(\phi_{\widehat{k+j}}) - L_{2}(\phi_{\widehat{k+j+1}})\big)2\pi\frac{N-1}{N} - \frac{2\pi}{N}| \\
- |(\phi_{\widehat{k+j}} - \phi_{\widehat{k+j+1}}) - \frac{2\pi}{N}|\Big] \\
= \sum_{j=1}^{S-1} \Big[2\pi\frac{N-1}{N} \big(L_{2}(\phi_{\widehat{k+j+1}}) - L_{2}(\phi_{\widehat{k+j}})\big) \\
+ L_{2}(\phi_{\widehat{k+j}})\phi_{\widehat{k+j}} - L_{2}(\phi_{\widehat{k+j+1}})\phi_{\widehat{k+j+1}} \Big] \\
= 2\pi\frac{N-1}{N} \big(L_{2}(\phi_{\widehat{k+S}}) - L_{2}(\phi_{\widehat{k+1}})\big) \\
+ L_{2}(\phi_{\widehat{k+1}})\phi_{\widehat{k+1}} - L_{2}(\phi_{\widehat{k+S}})\phi_{\widehat{k+S}}
\end{multline}
where we used the relationships \(\phi_{\widehat{k+j}}-\phi_{\widehat{k+j+1}} < \frac{2\pi}{N}\) and \(\big[(1 - L_{2}(\phi_{\widehat{k+j}}))\phi_{\widehat{k+j}} - (1 - L_{2}(\phi_{\widehat{k+j+1}}))\phi_{\widehat{k+j+1}}\big] + \big(L_{2}(\phi_{\widehat{k+j}}) - L_{2}(\phi_{\widehat{k+j+1}})\big)2\pi\frac{N-1}{N} < \frac{2\pi}{N}\) for \(j = 1, \dots, S-1\).

Eq. \eqref{eq:DesyncIndexDifferenceExpand_p7} can be simplified as follows:
\begin{multline} \label{eq:DesyncIndexDifferenceExpand_p7simple}
\sum_{q=S+1}^{N-M-2} \Big[|\Delta_{\widehat{k+q}}^{+} - \frac{2\pi}{N}| - |\Delta_{\widehat{k+q}} - \frac{2\pi}{N}|\Big] \\
= \sum_{q=S+1}^{N-M-2} \Big[|\Delta_{\widehat{k+q}} - \frac{2\pi}{N}| - |\Delta_{\widehat{k+q}} - \frac{2\pi}{N}|\Big] = 0
\end{multline}
Thus, combining \eqref{eq:DesyncIndexDifferenceExpand_p1}, \eqref{eq:DesyncIndexDifferenceExpand_p6}, and \eqref{eq:DesyncIndexDifferenceExpand_p2simple}--\eqref{eq:DesyncIndexDifferenceExpand_p7simple}, we can write \eqref{eq:DesyncIndexDifference} in two main parts:
\begin{subequations} \label{eq:DesyncIndexDifferenceFinal}
	\begin{equation*}
	P^{+} - P = \sum_{k=1}^{N} \Big[|\Delta_{k}^{+} - \frac{2\pi}{N}| - |\Delta_{k} - \frac{2\pi}{N}|\Big]
	\end{equation*}
	\begin{multline} \label{eq:DesyncIndexDifferenceFinal_pA}
	= |\Delta_{\widehat{k-M-1}}^{+} - \frac{2\pi}{N}| - |\Delta_{\widehat{k-M-1}} - \frac{2\pi}{N}| \\
	+ L_{1}(\phi_{\widehat{k-M}}) \big(\phi_{\widehat{k-M}} - \frac{2\pi}{N}\big)
	\end{multline}
	\begin{multline} \label{eq:DesyncIndexDifferenceFinal_pB}
	+ |\Delta_{\widehat{k+S}}^{+} - \frac{2\pi}{N}| - |\Delta_{\widehat{k+S}} - \frac{2\pi}{N}| \\
	+ L_{2}(\phi_{\widehat{k+S}}) \big(2\pi\frac{N-1}{N} - \phi_{\widehat{k+S}}\big)
	\end{multline}
\end{subequations}

This result in \eqref{eq:DesyncIndexDifferenceFinal} is similar to that found in \cite{GaoPRCDesync2017}. Since \(L_{1}(\phi_{\widehat{k-M}})\) and \(L_{2}(\phi_{\widehat{k+S}})\) are constants in the range \([0,1)\), then a direct comparison can be made with \textbf{Part A} and \textbf{Part B} in \cite{GaoPRCDesync2017} to \eqref{eq:DesyncIndexDifferenceFinal_pA} and \eqref{eq:DesyncIndexDifferenceFinal_pB}, respectively. Borrowing the conclusion made in \cite{GaoPRCDesync2017}, we conclude that \(P^{+} - P \leq 0\) holds and cannot always be zero, and thus the measure \(P\) in \eqref{eq:DesyncIndex} will decrease to zero. Therefore, the network will achieve phase desynchronization using a phase response function given by \eqref{eq:PRFGeneral} 
such that the forward coupling function, \(L_{1}(\phi_i) \in [0,1)\), over the domain \((0,\frac{2\pi}{N})\) and the backward coupling function, \(L_{2}(\phi_i) \in [0,1)\), over the domain \((2\pi\frac{N-1}{N},2\pi)\) cause the phase update rule in \eqref{eq:PhaseUpdateGeneral} to be strictly increasing.
\end{proof}

\bibliographystyle{unsrt}
\bibliography{PRCHeading_Rate_Constraint_double_column.bib}

\end{document}